\theoremstyle{thmstyleone}%
\newtheorem{theorem}{Theorem}%
\newtheorem{proposition}[theorem]{Proposition}%
\theoremstyle{thmstyletwo}%
\theoremstyle{thmstylethree}%
\renewcommand{\gets}{\mathrel{{:}{=}}}
\begin{document}

\title[Article Title]{A Tie-breaking based Local Search Algorithm for Stable Matching Problems}

\author{\fnm{Junyuan} \sur{Qiu}}\email{jq2388@columbia.edu}

\affil{\orgname{Columbia University}}

\abstract{The stable marriage problem with incomplete lists and ties (SMTI) and the hospitals/residents problem with ties (HRT) are important in matching theory with broad practical applications. In this paper, we introduce a tie-breaking based local search (TBLS) algorithm designed to achieve a weakly stable matching of maximum size for both the SMTI and HRT problems. TBLS begins by arbitrarily resolving all ties and iteratively refines the tie-breaking strategy by adjusting the relative order within ties based on preference ranks and the current stable matching. Additionally, we introduce TBLS-E, an equity-focused variant of TBLS, specifically designed for the SMTI problem. This variant maintains the objective of maximizing matching size, while enhancing equity through two simple modifications. In comparison with ten other approximation and local search algorithms, TBLS achieves the highest matching size, while TBLS-E exhibits the lowest sex equality cost. Significantly, TBLS-E preserves a matching size comparable to that of TBLS. Both our algorithms demonstrate faster computational speed than other local search algorithms in solving large-scale instances. Moreover, our scalability analysis shows that both algorithms maintain efficient performance as problem size increases.}

\keywords{Local search, Tie-breaking, Fair matching, SMTI, HRT, Blocking pairs}

\maketitle



\section{Introduction}\label{sec:introduction}

In 1962, Gale and Shapley introduced the concept of stable matchings through two problems: the stable marriage (SM) problem and the hospitals/residents (HR) problem, with the latter initially referred to as the ``college admissions problem'' \citep{GS-1962}. In an SM instance of size $n$, $n$  men with $n$ women each rank all members of the opposite gender in a strict order of preference, with the goal to pair them in such a way that no man and woman would prefer each other over their current partner. Similarly, in an HR instance of size $n$, $n$ residents and $m$ hospitals with quotas strictly rank each other, aiming to make assignments that ensure no resident and hospital would prefer each other over their current assignments. In 2003, the concept was generalized further by the introduction of the stable $b$-matching problem, which extends the basic framework to accommodate multiple capacities \citep{Stable-b-Matching}.

A natural variant of SM and HR incorporates ties and incomplete preference lists, allowing the agents to express indifference and exclude unacceptable members. These variants are known as the stable marriage problem with incomplete lists and ties (SMTI) \citep{SMTI} and the hospitals/residents problem with ties (HRT) \citep{HRT}. Recently, the SMTI and HRT problems have received significant attention from researchers due to their relevance in diverse applications such as the child-adoption matching problem \citep{SMTI-IP-Child-adoption-1, Child-adoption-2} and the student-project allocation problem \citep{Student-Project-Allocation-1, Student-Project-Allocation-2, Student-Project-Allocation-3}. When ties are present in the preference lists, three criteria are used to define the stability of a matching: \textit{weak stability}, \textit{strong stability}, and \textit{super-stability} \citep{Three-Stability-SMTI, Three-Stability-HRT}. Among these definitions, weak stability has been the focus of extensive research \citep{HA, SMTI-IP-Child-adoption-1, Weakly-Stability-1, Weakly-Stability-2}. 

Beyond achieving stability, objectives also include maximizing matching size and ensuring fairness. For a given SM or HR problem, all stable matchings have the same size \citep{Many-Stable-Matchings-Same-Size-SM-HR}. A stable matching can be easily found using the Gale-Shapley (GS) algorithm proposed by \cite{GS-1962}. For the SMTI or HRT problem, a weakly stable matching can be obtained by applying the GS algorithm after breaking all ties. However, different methods of breaking ties generally result in stable matchings of varying sizes \citep{HRT, Break-Tie-Different-Size-SMTI}. Consequently, a natural objective is to find a maximum cardinality weakly stable matching for the SMTI and HRT problems, referred to as MAX-SMTI and MAX-HRT, respectively. These problems have been proven to be NP-hard \citep{HRT, SMTI-NP-Hard}. Equity is another criterion for matchings, especially in societal applications requiring a compromise stable matching \citep{Society-Applications-1, Society-Applications-2}. A popular way to assess fairness between the two parties is the sex equality cost \citep{Sex-Equality-Cost}. The problem of minimizing the sex equality cost in the SM problem is referred to as the sex-equal stable marriage (SESM) problem, which is NP-hard \citep{Sex-Equality-Cost-NP-Hard}. Numerous studies have investigated this problem and proposed corresponding algorithms to address it \citep{SESM-2, SESM-1, SESM-3}. However, efficiently solving large instances of MAX-SMTI and MAX-HRT remains an ongoing research topic, with limited studies striving for equity while solving the MAX-SMTI problem. 

\subsection{Our contribution}\label{subsec:our-contributions}

In this paper, we introduce a tie-breaking based local search (TBLS) algorithm to address the MAX-SMTI and MAX-HRT problems. Initially, TBLS arbitrarily breaks all ties and applies a base algorithm designed for the SM or HR problem, such as the GS algorithm, to achieve a stable matching. Our algorithm then iteratively refines the tie-breaking strategy. Rather than adjusting ties randomly, TBLS implements a well-designed adjustment to purposefully introduce a \emph{blocking pair} (BP) by increasing an agent’s rank within a tie, aimed at filling the unassigned position for this agent. To escape local optima, a disruption that randomly adjusts some ties is applied either when no adjustments are available or at a low probability. Following the tie-breaking strategy refinement process, TBLS uses a BP removal process to effectively secure a stable matching from the existing matching. Instead of re-evaluating the entire agent set, this removal process requires examining only a specific subset of agents to identify and eliminate potential BPs. If this removal process exceeds a predefined time threshold, the base algorithm is reapplied to guarantee a stable matching. This iterative refinement continues until either a perfect matching is found or the maximum number of iterations is reached.

Additionally, we propose an equity-focused variant of TBLS, called TBLS-E, aimed at finding relatively fair matchings for the MAX-SMTI problem. In this variant, an algorithm designed for the SESM problem serves as the base algorithm, and the choice of tie adjustments is strategically restricted to improve fairness while maximizing matching size.

In our experimental study, we refine the random problem generator originally proposed by \cite{Weakly-Stability-1} to generate small-sized and large-sized instances of the MAX-SMTI and MAX-HRT problems. This refined generator produces more challenging instances by introducing numerous short ties. We implemented ten additional approximation and local search algorithms for comparison. Our experimental results reveal that TBLS consistently outperforms other algorithms in achieving larger matching sizes for both the MAX-SMTI and MAX-HRT problems. Significantly, the equity-focused version, TBLS-E, achieves the lowest sex equality cost for the MAX-SMTI problem while maintaining matching sizes comparable to TBLS. Moreover, both TBLS and TBLS-E exhibit faster computational speed than other local search methods when applied to large-sized problems. Scalability tests also confirm that both algorithms scale efficiently with increasing input sizes. Our algorithms introduce a new approach to solving the MAX-SMTI and MAX-HRT problems by iteratively refining tie-breaking strategies. Furthermore, TBLS-E offers a way to obtain a relatively fair matching when solving the MAX-SMTI problem.

The remainder of this paper is organized as follows: Section \ref{sec:related-work} describes the related work, Section \ref{sec:preliminaries} provides the definitions relevant to the problem, Section \ref{sec:TBLS} details the proposed algorithms, Section \ref{sec:experiments} presents the experimental results, Section \ref{sec:conclusion} concludes the paper.

\section{Related work}\label{sec:related-work}

In recent years, the MAX-SMTI and MAX-HRT problems have gained significant attention as popular research topics in the fields of operations research. \cite{SMTI-NP-Hard} demonstrated that MAX-SMTI is NP-complete even under strict restrictions on the position and length of ties. Moreover, \cite{Break-Tie-Different-Size-SMTI} showed that the MAX-SMTI problem remains NP-hard even when the preference list of each individual is limited to a maximum length of three. Similarly, \cite{HRT} confirmed that MAX-HRT is also NP-hard. Various methods have been proposed to address these two problems.

Several polynomial-time approximation algorithms have been proposed to solve the MAX-SMTI and MAX-HRT problems. A straightforward approximation algorithm for these two problems involves applying the GS algorithm directly after resolving all ties in the given instance. which serves as a 2-approximation algorithm \citep{SMTI-NP-Hard}. Building on GS, \cite{5/3-Approximation} proposed a $\frac{5}{3}$-approximation algorithm, referred to as Kir{\'a}ly’s algorithm. Leveraging some ideas from Kir{\'a}ly’s algorithm, \cite{First-3/2-Approximation} developed the first $\frac{3}{2}$-approximation algorithm, referred to as McDermid's algorithm. Subsequently, \cite{GSM-ASBM} introduced faster and simpler $\frac{3}{2}$-approximation algorithms, referred to as GSM for solving the MAX-SMTI problem and ASBM for solving the MAX-HRT problem. Building on some concepts from GSM and ASBM, \cite{GSA2-HPA} proposed natural and local linear-time algorithms, GSA2 and HPA, for solving the MAX-SMTI and MAX-HRT problems, respectively, while maintaining the same approximation ratio. All of the above research focuses on establishing approximation guarantees, without providing any experimental evaluations. 

Experimental results are commonly reported in the literature involving constraint programming or heuristic algorithms. \cite{Weakly-Stability-1} introduced a problem generator for creating random SMTI instances and proposed a constraint programming approach to address the MAX-SMTI problem. \cite{LTIU} developed a local search algorithm called LTIU for the MAX-SMTI problem, which operates by eliminating the \emph{undominated blocking pair} (UBP). Further refining the local search process, \cite{AS} applied the adaptive search (AS)—a meta-heuristic strategy proposed by \cite{AS-Meta-Heuristic}—to remove UBPs more precisely than relying solely on a single global cost function. For MAX-SMTI problems of size 100, AS is more efficient than both LTIU and McDermid's algorithm. To efficiently address large-scale MAX-SMTI problems, \cite{MCS} designed a heuristic search algorithm, MCS, that strategically eliminates UBPs based on maximum conflicts, demonstrating superior performance and lower computational costs compared with AS and LTIU within 3,000 iterations. For solving the MAX-HRT problem, the min-conflicts algorithm (MCA) was introduced by \cite{MCA}, which utilizes preference ranks to remove UBPs. Similarly, \cite{HR} developed a heuristic repair (HR) algorithm, focused on removing UBPs based on preference ranks. Both MCA and HR outperform the LTIU method in effectiveness. \cite{HA} proposed a simple heuristic algorithm, called HA, to solve MAX-HRT by gradually forming and removing resident-hospital pairs. Most of the aforementioned literature lacks comparisons with approximation algorithms in terms of solution quality. 

Additionally, extensive research has been done on the SESM problem. A polynomial time approximation algorithm, proposed by \cite{SESM-1}, seeks to closely approximate the optimal solution. \cite{SESM-3} proposed a bidirectional local search algorithm, which simultaneously explores the solution space by conducting a forward search from the man-optimal stable matching and a backward search from the woman-optimal stable matching. At each step, the algorithm employs breakmarriage operations to generate neighboring stable matchings. \cite{PDB} innovated the first unbiased and voluntary methods for deriving an equitable stable matching within cubic time, which encompass the Late Discontent Suspension (EDS), Early Discontent Suspension (LDS), and Permanent Discontent Ban (PDB) algorithms. Few studies have considered optimizing the sex equality cost in the SMTI problem.

\section{Preliminaries}\label{sec:preliminaries}

This section provides formal definitions related to the bipartite stable $b$-matching problem, a general framework for the SMTI and HRT problems, as well as the sex equality cost. 

\subsection{Bipartite stable $b$-matching}\label{subsec:bipartitie-stable-b-matching}

An undirected bipartite graph $G=(V,E)$, preference lists $L$, and a quota function $b:V \rightarrow \mathbb{N}$ are the inputs of a bipartite stable $b$-matching problem. In the graph $G$, $V=U\cup W$, where $U$ and $W$ are disjoint sets. Each vertex in $U$ is connected to a subset of vertices in $W$, and similarly, vertices in $W$ are connected to subsets in $U$. Each agent $v\in V$ has a preference list $L_v$ consisting of agents who are adjacent to $v$. We refer to all the agents in $L_v$ as $v$'s \emph{candidates}. Additionally, the preference list of each agent may include ties. We denote $r_v(x)$ as the rank of $x$ in $v$'s preference list. If $v$ prefers $x$ and $y$ equally, this is denoted by $r_v(x) = r_v(y)$. If $v$ strictly prefer $x$ to $y$, we write $r_v(x) < r_v(y)$. Furthermore, we define $\phi_v(x)=\{y\in L_x \mid y\neq v, r_x(v)=r_x(y)\}$ as a set of agents who have the same rank as $v$ in agent $x$'s preference list.

A $b$-matching $M$ is a subset of $E$ such that each agent $v$ is connected to at most $b(v)$ edges in $M$, that is, $deg_M(v) \leq b(v)$, where $deg_M$ denotes the degree of vertex $v$ in the graph $G_M=(V,M)$. The size of $M$, defined as $|M|$, refers to the number of matched pairs. We denote $M(v)$ as the set of agents who are connected to agent $v$ in $M$, referred to as $v$'s \emph{partners}. An agent $v$ is \emph{full} if $deg_M(v)=b(v)$ and is \emph{free} if $deg_M(v) < b(v)$. The set of all free agents in $M$ is denoted by $F_M$. For each free agent $f\in F_M$, let $\xi_f^M=\{x\in L_f \mid x\notin M(f)\}$ denote the set of agents who are candidates for $f$ but are not currently matched with it in matching $M$. A matching $M$ is said to be \textit{perfect} if and only if there are no free agents.

The agent $x$ is termed as $v$'s \emph{worst partner} if $x\in M(v)$ and $r_v(x) = \max \{r_v(y) \mid y \in M(v)\}$. In a $b$-matching $M$, $(u,w)$ is a \emph{blocking pair} (BP) if and only if:
\begin{enumerate}[label=(\roman*)., itemsep=5pt, parsep=0pt, topsep=5pt]
    \item $u\in L_w$, $w\in L_u$, and $(u, w)\notin M$;
    \item $u$ either is free, or strictly prefers $w$ over its worst partner in $M$;
    \item $w$ either is free, or strictly prefers $u$ over its worst partner in $M$.
\end{enumerate}
BP $(u, w_i)$ dominates BP $(u, w_j)$ from $u$'s point of view if $u$ strictly prefers $w_i$ to $w_j$. Similarly, BP $(u_i, w)$ dominates BP $(u_j, w)$ from $w$'s point of view if $w$ strictly prefers $u_i$ to $u_j$. A BP is an \emph{undominated blocking pair} (UBP) of $v$ if there are no other BPs dominating it from $v$'s perspective. Removing UBPs may decrease the number of BPs more effectively than merely eliminating BPs \citep{HR, LTIU}. A matching $M$ is a weakly stable $b$-matching if and only if it has no BPs. In this paper, since we are focusing exclusively on weakly stable matchings, we will refer to them simply as stable matchings.

The bipartite stable $b$-matching problem provides a robust framework for understanding and addressing the SMTI and HRT problems. Specifically, an SMTI problem corresponds to a bipartite stable $b$-matching problem where $U$ represents men, $W$ represents women, and $b: V \rightarrow \{1\}$. An HRT problem corresponds to a bipartite stable $b$-matching problem where $U$ represents residents and $W$ represents hospitals. In this context, the function $b$ is defined such that $b(u) = 1$ for all $u\in U$ and $b(w) = C(w)$ for all $w \in W$, where $C(w)$ is the quota of hospital $w$. We assume, without loss of generality, that $\sum_{u\in U} b(u) = \sum_{w\in W} b(w)$ for both problems. 

\subsection{Sex equality cost}\label{subsec:sex-equality-cost}

Sex equality cost is a popular measure for achieving equity in the SM problem. This metric quantifies the gap between the preferences obtained by the two sides, defined as:
\[
    d_{SM}(M) = \Bigg|\sum_{(u,w)\in M}r_u(w) - \sum_{(u,w)\in M}r_w(u)\Bigg|
\]
where $u$ and $w$ represent a man and a woman, respectively. 

In the SMTI problem, there may be several single men and women. We denote the set $M_{paired}=\{(u,w)\in M\mid u\neq\varnothing,w\neq\varnothing\}$ as the set of matching pairs consisting of matched men and women. When calculating the sex equality cost for the SMTI problem, we focus on the matched pairs, disregarding the single individuals: 
\[
    d_{SMTI}(M) = \Bigg|\sum_{(u,w)\in M_{paired}} r_u(w) - \sum_{(u,w)\in M_{paired}} r_w(u)\Bigg|
\]

Table \ref{tab:toy-smti} shows an SMTI instance with four men and four women and their respective preference lists. In the preference lists, for example, $m_1: (w_1 \; w_3) \; w_2$ means that man $m_1$ equally prefers women $w_1$ and $w_3$, and strictly prefers both over woman $w_2$. Consider a stable matching $M = \{(m_1,w_1), (m_2,w_2), (m_3, \varnothing), (m_4, \varnothing), (w_3, \varnothing), (w_4, \varnothing)\}$. The corresponding $M_{paired}$ is $\{(m_1,w_1), (m_2,w_2)\}$. Then, the sex equality cost is calculated as follows:
\[
    d_{SMTI}(M) = \Big| \big(r_{m_1}(w_1) + r_{m_2}(w_2) \big) - \big(r_{w_1}(m_1) + r_{w_2}(m_2) \big) \Big| = \Big| (1 + 2) - (1 + 1) \Big| = 1
\]

A stable matching $M$ is considered to favor the $U$ side if $\sum_{(u,w)\in M_{paired}}r_u(w) < \sum_{(u,w)\in M_{paired}}r_w(u)$. Conversely, $M$ is considered to favor the $W$ side if $\sum_{(u,w)\in M_{paired}}r_u(w) > \sum_{(u,w)\in M_{paired}}r_w(u)$.

\begin{table}
\caption{An SMTI example of size 4}
\label{tab:toy-smti}
\begin{tabularx}{\textwidth}{>{\raggedright\arraybackslash}X c >{\raggedright\arraybackslash}X}
\toprule
Men's preference list & & Women's preference list \\
\midrule
$m_1: (w_1 \; w_3) \; w_2$ & & $w_1: m_1 \; m_3 \; m_2$ \\
$m_2: w_1 \; w_2 \; w_4$ & & $w_2: (m_2 \; m_4) \; m_1$ \\
$m_3: w_1$ & & $w_3: m_1$ \\
$m_4: w_2$ & & $w_4: m_2$ \\
\bottomrule
\end{tabularx}
\end{table}

\section{A local search algorithm for SMTI and HRT}\label{sec:TBLS}

In this section, we introduce a tie-breaking based local search algorithm, named TBLS, designed to address the MAX-SMTI and MAX-HRT problems, and its equity-focused variant, TBLS-E. The TBLS algorithm is illustrated in Algorithm \ref{TBLS}. TBLS starts to find an efficient strategy $S_{best}$ and a maximum stable matching $M_{best}$ by arbitrarily breaking all the ties and implementing a base algorithm, as outlined in lines 2-5 of Algorithm \ref{TBLS}. In each iteration, TBLS performs three steps. First, it refines the current tie-breaking strategy $S$ using the subroutine \texttt{RefineTieBreakingStrategy}. Following this refinement, it obtains a corresponding stable matching $M$ using the subroutine \texttt{ObtainStableMatching}. Lastly, if $M_{best}$ is found to be worse than $M$ according to the evaluation function $E$, then $M$ and $S$ are assigned to $M_{best}$ and $S_{best}$, respectively, as specified in lines 10-13 of Algorithm \ref{TBLS}. TBLS-E is developed by incorporating two key modifications into the original TBLS framework, the details of which are provided in Section \ref{subsec:TBLS-E}.

In each iteration, the time complexity of TBLS using GS as the base algorithm is $O(n^2)$, whereas that of TBLS-E using PDB as the base algorithm is $O(n^3)$. The overall space complexity of both algorithms is $O(n^3)$. A detailed complexity analysis is provided in Section S1 of Online Resource 1.

Next, we describe the main concepts of these two subroutines:

\texttt{RefineTieBreakingStrategy} refines the current tie-breaking strategy by adjusting the relative order within ties. These adjustments, based on preference ranks and the current stable matching, aim to improve the rank of selected free agents in the preference lists of their respective candidates, thereby increasing their chances of being matched. Each adjustment targets a free agent and one of its candidates, purposefully introducing a BP to alter the current stable matching. When this subroutine is invoked, an adjustment is selected randomly to refine the strategy. Moreover, when no adjustments are available, or with a small probability, the current tie-breaking strategy is disrupted to prevent local optima by randomly adjusting some ties.

\texttt{ObtainStableMatching} obtains a stable matching by either removing BPs or directly using the base algorithm. Given the minor updates to the preference lists during each refinement, attaining a stable matching from the existing matching by removing BPs is typically more efficient than applying the base algorithm. Moreover, this characteristic enables the examination of only a specific subset of agents to identify and eliminate potential BPs, rather than re-evaluating the entire set of agents. Therefore, to effectively obtain a stable matching after the refinement process, a BP removal process is developed to identify and eliminate UBPs. To prevent this process from becoming trapped in an endless cycle, a time threshold is established. Once this threshold is exceeded, the subroutine shifts to applying the base algorithm to secure a stable matching. 

The rest of this section is structured as follows: Section \ref{subsec:evaluation-function} discusses the evaluation function, Section \ref{subsec:refine-tie-breaking-strategy} examines the subroutine \texttt{RefineTieBreakingStrategy}, Section \ref{subsec:obtain-stable-matching} presents the subroutine \texttt{ObtainStableMatching}, Section \ref{subsec:TBLS-E} outlines two modifications for developing TBLS-E, an equity-focused variant of TBLS.

\begin{algorithm}
\caption{A Tie-breaking based Local Search Algorithm}
\label{TBLS}
\SetAlgoLined

\KwIn{\begin{minipage}[t]{\linewidth}
    \strut - An instance $I$. \\
    \strut - A small probability $p_d$. \\
    \strut - A base algorithm $A$. \\
    \strut - The maximum number of iterations $max\_iters$. \\
    \strut - Two integer $k_u$ and $k_w$. \\
    \strut - The time threshold for obtaining stable matchings $T$.
\end{minipage}
}
\KwOut{\begin{minipage}[t]{\linewidth}
    \strut - A matching $M$. \\
    \strut - A tie-breaking strategy $S$.
\end{minipage}}

\SetKwFunction{FMain}{Main}
\SetKwFunction{RTBS}{RefineTieBreakingStrategy}
\SetKwFunction{OSM}{ObtainStableMatching}
\SetKwProg{Fn}{Function}{:}{\KwRet{$M_{best},S_{best}$}}
\Fn{\FMain{$I$}}{
    $S \gets $ arbitrarily break all the ties in the instance $I$\;
    $M \gets $ apply the base algorithm $A$ to find a stable matching after breaking
    all the ties in the instance $I$ using the tie-breaking strategy $S$\;
    $S_{best} \gets S$\;
    $M_{best} \gets M$\;
    $iter \gets 0$\;
    \While{$(iter \leq max\_iters)$}{
        $S,Q_a \gets $ \RTBS{$I,M,S$}\;
        $M \gets$ \OSM{$I,M,S,Q_a$}\;
        \If{$(E(M,I) \ge E(M_{best},I))$}{
            $S_{best} \gets S$\;
            $M_{best} \gets M$\;
        }
        $iter \gets iter + 1$\;
    }
}
\end{algorithm}

\subsection{Evaluation function}\label{subsec:evaluation-function}

The evaluation function is defined by two criteria. The primary criterion is to maximize the matching size. Subsequently, among matchings of identical size, the secondary criterion gives priority to freeing up agents with longer preference lists, as these agents have a higher likelihood of being matched later to increase the matching size. Therefore, we define the evaluation function as follows:
\[
    \max \; E(M,L) = \text{size}(M)\times bigM + \sum_{v\in F_M} |L_v|\times \big( b(v) - |M(v)|\big)
\]
where $bigM$ is a large enough number to be fixed later, $|L_v|$ represents the length of agent $v$'s preference list, and $b(v) - |M(v)|$ indicates the number of unassigned positions of agent $v$.

To ensure that the evaluation scores of matchings with larger matching sizes consistently exceed those of matchings with smaller matching sizes, the parameter $bigM$ is defined as follows:
\[
    bigM = \big(\max_{u \in U} |L_u| + \max_{w \in W} |L_w|\big) \times \left(N - e_M\right)
\]
where $e_M$ denotes the estimated minimum size of the matching throughout the entire search process. To simplify parameter settings, we define $e_M = c\cdot|M_{init}|$, where $c$ represents a ratio varying between 0 and 1, and $M_{init}$ denotes the matching obtained from the random tie-breaking strategy at the beginning.

\subsection{Refining the tie-breaking strategy}\label{subsec:refine-tie-breaking-strategy}

A tie-breaking strategy, denoted as $S$, converts the original preference lists $L$, which contain ties, into tie-free (i.e., strictly ordered) preference lists $L^*$. Let $r^*_v(x)$ represent the rank of $x$ in $v$'s tie-free preference list. In the subroutine \texttt{RefineTieBreakingStrategy}, outlined in Algorithm \ref{alg:refine-tie-breaking-strategy}, the tie-breaking strategy is updated as follows:
\begin{enumerate}[itemsep=5pt, parsep=0pt, topsep=5pt]
    \item Obtain a set of candidate adjustments $R^b_M$ via the subroutine \texttt{ObtainAdjustments} (line 3).
    \item If no adjustments are available, i.e. $R^b_M$ is an empty set, or with a small probability $p_d$, then a disruption is implemented to escape local optima. The disruption involves randomly selecting $k_u$ agents from set $U$ and $k_w$ agents from set $W$, and then arbitrarily breaking all ties in their respective preference lists (lines 4-11).
    \item Otherwise, if at least one adjustment is available, i.e. $R^b_M$ is non-empty, and with a probability $1-p_d$, then an adjustment is randomly selected to refine the tie-breaking strategy. This adjustment improves a specific free agent $f$'s rank in its candidate $x$'s tie-free preference list by positioning $f$ at the highest rank among the agents who share the same rank as $f$ in $x$'s preference list, i.e., the agents in $\phi_f(x)$ (lines 12-16).
\end{enumerate}
Additionally, a set $Q_a$ is maintained to record the agents whose preference lists have been altered during the refinement process. This set will subsequently be used in the BP removal process within the subroutine \texttt{ObtainStableMatching}, which is discussed in Section \ref{subsec:obtain-stable-matching}.

\begin{algorithm}
\caption{Refine tie-breaking strategy}
\label{alg:refine-tie-breaking-strategy}
\SetAlgoLined

\KwIn{\begin{minipage}[t]{\linewidth}
    \strut - An instance $I$. \\
    \strut - A matching $M$. \\
    \strut - A tie-breaking strategy $S$. \\
    \strut - Information on the ties in the instance $\phi$. \\
    \strut - A small probability $p_d$. \\
    \strut - Two integer $k_u$ and $k_w$.
\end{minipage}
}
\KwOut{\begin{minipage}[t]{\linewidth}
    \strut - A tie-breaking strategy $S$. \\
    \strut - A set of agents whose preference lists are altered $Q_a$.
\end{minipage}
}

\SetKwFunction{FMain}{RefineTieBreakingStrategy}
\SetKwFunction{OA}{ObtainAdjustments}
\SetKwProg{Fn}{Function}{:}{\KwRet{$S,Q_a$}}
\Fn{\FMain{$I,M,S$}}{
    $Q_a \gets \emptyset$ \Comment*[r]{a set of agents whose preference lists are altered}
    $R^b_M \gets $ \OA{$I,M$}\;
    \uIf{$(\text{a small probability of } p_d \textbf{ or } R^b_M = \emptyset)$}{
        $X \gets$ randomly select $k_u$ agents from $U$ and $k_w$ agents from $W$\;
        \For{$(each\; v\in X)$}{
            $Q_a \gets Q_a \cup v$\;
            \For{$(\text{each tie in v's preference list})$}{
                Break the $tie$ arbitrarily and update $S$\; 
            }
        }
    }\Else{
        $(f,x) \gets $ randomly select one adjustment from $R^b_M$\;
        $Q_a \gets Q_a \cup x$\;
        Update $S$ to position $f$ at the highest rank in $x$'s tie-free preference list among those agents in $\phi_f(x)$\;
    }
}
\end{algorithm}

Next, we detail the process of identifying adjustments that can increase the likelihood of matching free agents and the construction of $R^b_M$. In the subroutine \texttt{ObtainAdjustments}, as illustrated in Algorithm \ref{alg:obtain-candidate-adjustments}, $R^b_M$ is generated through the following two steps:
\begin{enumerate}[itemsep=5pt, parsep=0pt, topsep=5pt]
    \item Identify a set $R_M$ of all potential adjustments (lines 4-11). Given a stable matching $M$, consider each free agent $f\in F_M$. For every agent $x\in \xi_f^M$, if the set intersection $\phi_f(x)\cap M(x)$ is non-empty, an adjustment is initiated. This adjustment aims to improve $f$'s rank in $x$'s tie-free preference list by positioning $f$ at the highest rank among those agents in $\phi_f(x)$, thereby ensuring $f$ ranks above at least one agent currently matched with $x$. This adjustment introduces $(f,x)$ as a BP and enhances $f$'s likelihood of being matched with $x$. All adjustments in $M$ are stored in $R_M$, which can be represented as $R_M=\{(f,x)\mid f\in F_M, x\in \xi_f^M, \phi_f(x)\cap M(x)\neq\varnothing\}$. The set of adjustments related to one free agent $f$ is denoted by $R_M(f) = \{a\in R_M \mid a[1]=f\}$, where $a[i]$ is the $i$-th element in adjustment $a$. Additional discussions on the relationship between our proposed tie-adjustment mechanism and the well-established breakmarriage operation are provided in Section S2 of Online Resource 1.
    \item Construct $R^b_M$ by sampling from $R_M$ (lines 12-16). One free agent may have multiple adjustments, and the number of adjustments can vary among different free agents. In some cases, agents with many unassigned positions might have relatively few adjustments, while those with fewer unassigned positions might have many adjustments. This imbalance can impact the performance of the algorithm, particularly when it is difficult to increase the overall matching size by filling the positions of those agents with fewer unassigned positions. Consequently, a balanced set of adjustments, $R^b_M$, is constructed through sampling from $R_M$. Specifically, for each free agent $f\in F_M$, we randomly select $k=\text{min}\{b(f)-|M(f)|, |R_M(f)|\}$ adjustments from $R_M(f)$ for inclusion in $R^b_M(f)$. The value of $k$ is chosen to ensure that the number of candidate adjustments for each free agent $f$ does not exceed its unfilled capacity, which is defined as $b(f)-|M(f)|$.
\end{enumerate}

\begin{algorithm}
\caption{Obtain a set of candidate adjustments}
\label{alg:obtain-candidate-adjustments}
\SetAlgoLined

\KwIn{\begin{minipage}[t]{\linewidth}
    \strut - An instance $I$. \\
    \strut - A matching $M$. \\
    \strut - Information on the ties in the instance $\phi$.
\end{minipage}
}
\KwOut{A set of candidate adjustments $R^b_M$}

\SetKwFunction{FMain}{ObtainAdjustments}
\SetKwProg{Fn}{Function}{:}{\KwRet{$R^b_M$}}
\Fn{\FMain{$I,M$}}{
    $F_M \gets \text{all free agents in } M$\;
    $R_M, R^b_M \gets \emptyset$\;
    \For{$(each\; f\in F_M)$}{
        $\xi_f^M \gets \text{the set of currently unmatched candidates for } f$\;
        \For{$(each\; x\in \xi_f^M)$}{
            \If{$(\phi_f(x)\cap M(x)\neq\emptyset)$}{
                $R_M \gets R_M\cup (f,x)$\;
            }
        }
    }
    \For{$(each\; f\in F_M)$}{
        $k \gets \text{min}\{b(f)-|M(f)|, |R_M(f)|\}$\;
        $X \gets \text{randomly select } k \text{ adjustments from } R_M(f)$\;
        $R^b_M \gets R^b_M\cup X$\;
    }
}
\end{algorithm}

We take the SMTI instance described in Table \ref{tab:toy-smti} as an example to illustrate the effectiveness of adjustments in achieving a larger stable matching. Consider a tie-breaking strategy $S_1$ that assigns $r^*_{m_1}(w_1) < r^*_{m_1}(w_3)$ and $r^*_{w_2}(m_2) < r^*_{w_2}(m_4)$. The resulting stable matching $M_1=\{(m_1,w_1), (m_2,w_2), (m_3, \varnothing), (m_4, \varnothing), (w_3, \varnothing), (w_4, \varnothing)\}$ has a size of two. The balanced set of the adjustments is $R^b_{M_1}=\{(m_4, w_2), (w_3, m_1)\}$. After implementing adjustment $(m_4, w_2)$, the new tie-breaking strategy $S_2$ results in $r^*_{w_2}(m_2) > r^*_{w_2}(m_4)$. The corresponding stable matching $M_2 = \{(m_1, w_1), (m_2, w_4), (m_4, w_2), (m_3, \varnothing), (w_3, \varnothing)\}$ has a size of three. The balanced adjustments now are $R^b_{M_2}=\{(w_3, m_1)\}$. Upon performing adjustment $(w_3, m_1)$, the refined tie-breaking strategy $S_3$ assigns $r^*_{m_1}(w_1) > r^*_{m_1}(w_3)$. Consequently, the stable matching $M_3 = \{(m_1, w_3), (m_2, w_4),  (m_3, w_1), (m_4, w_2)\}$ results in a perfect matching with a size of four.

\subsection{Obtaining a stable matching}\label{subsec:obtain-stable-matching}

Following the refinement of the tie-breaking strategy, it is necessary to obtain a corresponding stable matching to assess the efficacy of the current strategy. As the preference lists of only a subset of agents are modified during the refinement process, most matching pairs from the previous stable matching can be retained. Rather than reconstructing a stable matching from scratch using the base algorithm, eliminating BPs from the existing matching is generally more efficient. Moreover, rather than re-evaluating the entire set of agents, only a specific subset of agents needs to be examined to identify and eliminate potential BPs. Therefore, we design a BP removal process to effectively identify and eliminate UBPs. However, this process may become trapped in an endless cycle while attempting to obtain a stable matching by removing BPs \citep{Endless-Cycle}. A time threshold $T$ is established for the BP removal process. If the duration of this process exceeds this threshold, the base algorithm is applied to ensure a stable matching is achieved. The function to obtain a stable matching is described in Algorithm \ref{alg:obtain-stable-matching}. The rest of this section will focus on the BP removal process, where only a specific set of agents is required for examination.\\

\begin{algorithm}
\caption{Obtain stable matching}
\label{alg:obtain-stable-matching}
\SetAlgoLined

\KwIn{\begin{minipage}[t]{\linewidth}
    \strut - An instance $I$. \\
    \strut - A matching $M$. \\
    \strut - A tie-breaking strategy $S$. \\
    \strut - A set of agents whose preference lists are altered $Q_a$. \\
    \strut - A base algorithm $A$. \\
    \strut - The time threshold for obtaining stable matchings $T$. 
\end{minipage}
}
\KwOut{A matching $M$.}

\SetKwFunction{FMain}{ObtainStableMatching}
\SetKwProg{Fn}{Function}{:}{\KwRet{$M$}}
\Fn{\FMain{$I,M,S,Q_a$}}{
    $X \gets Q_a$ \Comment*[r]{$X$ is the set of agents requiring examination}
    \While{$(execution\; time \leq T \textbf{ and } X \neq \emptyset)$}{
        $v \gets $ randomly pop one agent from $X$\; 
        $y_{worst} \gets $ the worst partner of $v$ in $M$ under $S$\;
        sort $v$'s tie-free preference list in ascending order\;
        \For{$(each\; y \in v's\; \text{tie-free preference list})$}{
            \uIf{$(y \in M(v))$}{
                continue\;
            }
            \uElseIf{$(\text{v is full} \textbf{ and } r^*_{v}(y) > r^*_{v}(y_{worst}))$}{
                break\;
            }
            \ElseIf{$((v,y)\; \text{is a blocking pair under }S)$}{
                $z_{worst} \gets $ the worst partner of $y$ in $M$ under $S$\;
                \If{$(v\; is\; full \textbf{ and } y_{worst}\; is\; full)$}{
                    $X \gets X\cup y_{worst}$\;
                }
                \If{$(y\; is\; full \textbf{ and } z_{worst}\; is\; full)$}{
                    $X \gets X\cup z_{worst}$\;
                }
                Remove BP $(v,y)$ in $M$\;
                $y_{worst} \gets $ the worst partner of $v$ in $M$ under $S$\;
            }
        }
    }
    \If{$(X \neq \emptyset)$}{
        $M \gets$ apply the base algorithm $A$ to find a stable matching after breaking all the ties in the instance $I$ using the tie-breaking strategy $S$\;
    }
}
\end{algorithm}

\begin{proposition}
Given a tie-breaking strategy $S_1$ and its corresponding stable matching $M_1$. If, after refining the tie-breaking strategy from $S_1$ to $S_2$, there exist no BPs for the agents in set $Q_a$ within the current matching $M_1$, then $M_1$ is also a stable matching corresponding to $S_2$.
\end{proposition}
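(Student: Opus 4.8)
The plan is to argue by contradiction, splitting on whether a hypothetical blocking pair involves an agent in $Q_a$. First I would isolate the one structural fact that drives everything: the refinement from $S_1$ to $S_2$ only rewrites the tie-free preference lists of the agents recorded in $Q_a$ (this is precisely what the bookkeeping of $Q_a$ in Algorithm \ref{alg:refine-tie-breaking-strategy} guarantees), so every agent $v\notin Q_a$ has exactly the same tie-free preference list, and hence the same ranking function $r^{*}_v$, under $S_1$ and under $S_2$. Moreover, the matching $M_1$ itself is untouched, so $deg_{M_1}(v)$, the free/full status of every vertex, and the partner sets $M_1(v)$ are unchanged; and for any $v\notin Q_a$ the worst partner of $v$ in $M_1$ — being determined by $M_1(v)$ together with $v$'s own tie-free list — is likewise the same under $S_1$ and $S_2$. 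In particular $M_1$ is still a valid $b$-matching under $S_2$ (the edge set and the quota function $b$ do not depend on tie-breaking), so "stable matching corresponding to $S_2$" is meaningful for $M_1$.

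Then I would suppose, for contradiction, that $M_1$ is not stable under $S_2$, i.e. that there is a blocking pair $(u,w)$ of $M_1$ with respect to the tie-free lists induced by $S_2$. If $u\in Q_a$ or $w\in Q_a$, this is a blocking pair involving an agent of $Q_a$, contradicting the hypothesis outright. Otherwise $u\notin Q_a$ and $w\notin Q_a$, and I would verify that each of the three defining conditions of a blocking pair (Section \ref{subsec:bipartitie-stable-b-matching}) transfers to $S_1$: condition (i) mentions only membership in the original lists $L_u,L_w$ and the edge set $M_1$, hence is independent of the tie-breaking; condition (ii) depends only on whether $u$ is free and on comparing $r^{*}_u(w)$ with the rank of $u$'s worst partner, all unchanged since $u\notin Q_a$; condition (iii) is the mirror image, using $w\notin Q_a$. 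Hence $(u,w)$ is also a blocking pair of $M_1$ under $S_1$, contradicting the stability of $M_1$ under $S_1$. Therefore no such pair exists and $M_1$ is stable under $S_2$.

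I do not anticipate a genuine difficulty; the statement is essentially an unwinding of the definitions. The one point that needs care in the write-up is making explicit that "being a blocking pair of $M_1$" and "worst partner in $M_1$" are \emph{local} notions — they depend only on the matching $M_1$ and on the preference data of the two agents concerned — so that discarding all information about the reshuffled lists of $Q_a$ still pins down these notions for agents outside $Q_a$, and that the free/full status of a vertex is a purely degree-based property insensitive to tie-breaking. Once that is stated cleanly, the case split and the condition-by-condition check are routine.
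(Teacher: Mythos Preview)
Your proposal is correct and follows essentially the same argument as the paper's own proof: a contradiction argument with the same two-case split on whether the hypothetical blocking pair meets $Q_a$, using that only the tie-free lists of agents in $Q_a$ change. Your write-up is in fact more explicit than the paper's, spelling out why each of the three BP conditions is local and hence transfers from $S_2$ back to $S_1$ for agents outside $Q_a$.
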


\begin{proof}
Assume, for contradiction, that $M_1$ is not stable under $S_2$. Then, there exists a BP $(u,w)$ under $S_2$. We consider two cases:
\begin{enumerate}[label=(\roman*)., itemsep=5pt, parsep=0pt, topsep=5pt]
    \item $u\in Q_a \textbf{ or } w\in Q_a:$ Given the conditions, no BPs exist for the agents in $Q_a$ within $M_1$ under $S_2$. Thus, this scenario is not possible.
    \item $u\notin Q_a \textbf{ and } w\notin Q_a:$ Since $M_1$ was stable under $S_1$, and the only change from $S_1$ to $S_2$ involves the agents in $Q_a$, the stability conditions for agents outside $Q_a$ remain unaffected. Thus, a BP under $S_2$ for agents not in $Q_a$ cannot exist. 
\end{enumerate} 
As both scenarios lead to contradictions, the initial assumption that $M_1$ is not stable under $S_2$ must be incorrect. Therefore, $M_1$ is a stable matching under $S_2$ provided that no BPs exist for the agents in set $Q_a$.
\end{proof}

Given a matching $M$, to remove a BP $(u, w)$, we follow these steps. First, if $u$ is full, we disconnect $u$ from its worst partner in $M$. Similarly, if $w$ is full, we disconnect $w$ from its worst partner in $M$. Then, we connect $u$ and $w$.\\

\begin{proposition}
    Let $M_1$ be a matching with a set $B_1$ of BPs. Removing a BP $(u,w)\in B_1$ results in a new matching $M_2$ and a new set $B_2$ of BPs. Let $w'$ and $u'$ be the agents disconnected from $u$ and $w$, respectively, during this removal process. If no agent is disconnected from $u$, then $w'$ is undefined; similarly, if no agent is disconnected from $w$, then $u'$ is undefined. If $B_2 \setminus B_1 \neq \varnothing$, then at least one of $u'$ or $w'$ exists, and all BPs in $B_2 \setminus B_1$ must involve either $u'$ or $w'$.
\end{proposition}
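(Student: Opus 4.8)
The plan is to exploit the fact that whether a pair $(a,b)$ is a blocking pair depends only on \emph{local} data at its two endpoints: the adjacency $a\in L_b,\ b\in L_a$, whether $(a,b)\in M$, whether each endpoint is free or full, and (if full) the rank of each endpoint's worst partner. So I would argue that the removal of $(u,w)$ changes this local data for only a handful of agents, and that any newly created blocking pair must therefore touch one of those agents; then I would whittle that list down to exactly $u'$ and $w'$.

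First I would record the elementary structural observation that $M_2=(M_1\setminus\{(u,w'),(u',w)\})\cup\{(u,w)\}$, where the edge $(u,w')$ is present only when $u$ was full in $M_1$ and $(u',w)$ only when $w$ was full in $M_1$. Consequently the only agents whose partner set differs between $M_1$ and $M_2$ lie in $\{u,w,u',w'\}$; more precisely $u$ and $u'$ (when defined) are the only $U$-agents affected and $w$ and $w'$ (when defined) the only $W$-agents affected. For every other agent the predicates ``free/full'' and ``worst partner'' are literally the same in $M_1$ and $M_2$, and for a pair disjoint from $\{u,w,u',w'\}$ one even has $(a,b)\in M_1\iff(a,b)\in M_2$. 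I would also note that $u,w,u',w'$ are pairwise distinct when defined (e.g.\ $u\ne u'$ because $(u,w)\notin M_1$), which is only needed to keep the case split clean.

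Next I would take an arbitrary $(a,b)\in B_2\setminus B_1$ (so $a\in U$, $b\in W$, and $(a,b)$ blocks $M_2$ but not $M_1$), assume toward a contradiction that $a\ne u'$ and $b\ne w'$ (reading this as vacuously true when the agent is undefined), and split into three cases. In the case $a=u$: since $(u,w)\in M_2$ we get $b\ne w$, and combined with $b\ne w'$ this makes $b$'s partner set unchanged and forces $(u,b)\notin M_1$; it then remains to transfer $u$'s blocking clause back to $M_1$ --- if $u$ is free in $M_1$ this is immediate, and if $u$ is full in $M_1$ then from $(u,w)$ blocking $M_1$ we have $r_u(w)<r_u(w')$ with $w'$ a worst partner of $u$, so the rank of $u$'s worst partner in $M_2$ is at most that in $M_1$, whence $u$ still strictly prefers $b$ to its worst partner in $M_1$. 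So $(a,b)\in B_1$, a contradiction. The case $b=w$ is symmetric, with $w$ and $u'$ playing the roles of $u$ and $w'$. In the remaining case $a\ne u$ and $b\ne w$: since also $a\ne u'$ and $b\ne w'$, both endpoints have unchanged partner sets and $(a,b)\in M_1\iff(a,b)\in M_2$, so every clause of the blocking-pair definition reads identically under $M_1$ and $M_2$, giving $(a,b)\in B_1$ --- again a contradiction. Hence $a=u'$ or $b=w'$; in particular the relevant one of $u',w'$ is defined, which simultaneously yields ``at least one of $u',w'$ exists'' and ``every member of $B_2\setminus B_1$ involves $u'$ or $w'$''.

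I expect the main obstacle to be the bookkeeping around the worst partner in the case $a=u$ (and its mirror image $b=w$). Because ``disconnect from the worst partner'' may sever just one of several tied worst partners, the honest claim is only that the worst-partner rank of $u$ does not \emph{increase} when passing from $M_1$ to $M_2$ (not that it strictly drops), and one must separately check that $u$ remains full after the swap $w'\mapsto w$. Once this monotonicity of the worst-partner rank is nailed down, the transfer of the blocking condition from $M_2$ back to $M_1$ is routine, and the rest of the argument is pure case-chasing on which of $\{u,w,u',w'\}$ an endpoint can equal.
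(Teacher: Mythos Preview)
Your proposal is correct and follows essentially the same approach as the paper's own proof: both argue that only the agents in $\{u,w,u',w'\}$ have their local matching data changed, then rule out $u$ and $w$ as endpoints of any new blocking pair by showing the blocking condition would transfer back to $M_1$. Your treatment is somewhat more explicit than the paper's about the worst-partner rank monotonicity at $u$ (and $w$) after the swap, but the underlying structure of the argument is the same.
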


\begin{proof}
Consider any BP $(x,y)$ that exists in $B_2$ but not in $B_1$. Since $(x,y)$ forms a BP in $M_2$ and not in $M_1$, the matchings of $x$ or $y$ must have been changed. The only changes in the matchings from $M_1$ to $M_2$ are attributed to the removal of the BP $(u, w)$, affecting the matchings of $u, w, u'$, and $w'$.
\begin{enumerate}[label=(\roman*)., itemsep=5pt, parsep=0pt, topsep=5pt]
    \item Consider agents $u$ and $w$. If $u$ was free in $M_1$, then no new BPs involving $u$ can be introduced in $M_2$ as such BPs would have been identified in $M_1$. Conversely, if $u$ was full in $M_1$, any BPs involving $u$ with candidates worse than $w$ in $B_1$ will no longer exist, and no additional BPs including $u$ will be introduced. The same logic applies to $w$.
    \item Consider agents $u'$ and $w'$. If $u'$ and $w'$ are both undefined, then clearly no new BPs involving them can arise. If either $u'$ or $w'$ exists, new BPs might be introduced. Specifically, if $w'$ was free in $M_1$, then no new BPs involving $w'$ are introduced in $M_2$, except potentially $(w',u')$, as any such BPs would have existed in $M_1$. This condition holds unless $u'$, previously full in $M_1$, becomes free in $M_2$, and if $u'$ is a candidate of $w'$, thus establishing $(w', u')$ as a new BP. If $w'$ was full in $M_1$, let $z$ represent the worst partner of $w'$ in $M_1$; in $M_2$, where $w'$ is free, possible new BPs might arise involving $w'$ and candidates worse than $z$. The same logic applies to $u'$.
\end{enumerate}
Therefore, if $B_2 \setminus B_1$ is non-empty, it necessitates that at least one of $u'$ or $w'$ exists, and all BPs in $B_2 \setminus B_1$ must involve either $u'$ or $w'$.
\end{proof}

Given these propositions, verifying the existence of BPs for each agent after refining the tie-breaking strategy is unnecessary. It is sufficient to focus on a specific subset of agents. Consequently, we maintain a set of agents requiring examination, initially set to $Q_a$. In each iteration, an agent $v$ is randomly selected from this set, and all BPs involving $v$ are removed. Specifically, if BPs involving $v$ are identified, the UBP from $v$'s perspective is removed. Subsequently, any agents who were previously full and disconnected are added to the set. The BP removal process is repeated until the set is empty or the time threshold is reached, as detailed in lines 2-24 of Algorithm \ref{alg:obtain-stable-matching}.

\subsection{Equity-focused variant of TBLS}\label{subsec:TBLS-E}

To achieve a relatively equitable outcome with a low sex equality cost when solving the MAX-SMTI problem, two modifications are introduced to TBLS:
\begin{enumerate}[label=(\roman*)., itemsep=5pt, parsep=0pt, topsep=5pt]
    \item Use an algorithm designed for the SESM problem, such as PDB, as the base algorithm.
    \item Restrict the choice of tie adjustments based on the current bias of the matching. If the matching favors the $U$ side, adjustments related to agents of $U$ are permitted. Specifically, only adjustments within the set $\{a\in R^b_M\mid a[1]\in U\}$ are allowed. Conversely, if the current matching favors the $W$ side, only adjustments related to agents of $W$ are permitted. In cases where no adjustments are available within the constraint, the restriction is temporarily lifted.
\end{enumerate}
By balancing the satisfaction across both sides, these two modifications aim to reduce the sex equality cost while maximizing the matching size. The first modification finds a stable matching with a low sex equality cost after applying the tie-breaking strategy. The second modification limits the choice of adjustments to reduce bias toward any one side during the refinement process. We refer to this variant as TBLS-E. 

To demonstrate the effectiveness of the second modification, consider the following scenario: suppose the current matching favors the $U$ side, and there is an adjustment $a=(f,x)$, where $f\in U$ and $x\in W$. If $f$ is successfully matched with $x$ following adjustment $a$, then an agent $q\in U$ would be disconnected from $x$. This disconnection may lead $q$ to form a new pairing with another agent $p$, where $r_{q}(p) \ge r_{q}(x)$. Therefore, adjustment $a$ contributes to mitigating the bias toward $U$.

\section{Experiments}\label{sec:experiments}

In this section, we compare the performance of our TBLS algorithm and its derived version, TBLS-E, with other methods in solving the MAX-SMTI and MAX-HRT problems. The experiments are performed using Python on a machine with an Intel Core i9-13900HX CPU (5.40 GHz) and 32GB of RAM. The source code is available at: \href{https://github.com/Junyuan-Qiu/Stable-Matching-Local-Search}{https://github.com/Junyuan-Qiu/Stable-Matching-Local-Search}. The rest of this section is structured as follows: Section \ref{subsec:experiments-problem-set} introduces the problem set, Section \ref{subsec:experiments-parameter-settings} details the parameter settings of our algorithms, Section \ref{subsec:experiments-performance-comparison} presents the performance comparison.

\subsection{Problem set}\label{subsec:experiments-problem-set}

We modified the random problem generator initially proposed by \cite{Weakly-Stability-1} to create more challenging instances. The updated SMTI generator accepts four parameters: size ($n$), probability of incompleteness ($p_1$), probability of initiating a tie ($p_2$), and a tie length generator ($g$). Except for tie-related processes, all generation steps remain the same as those of the original generator. The updated tie generation steps are as follows: 
\begin{enumerate}[itemsep=5pt, parsep=0pt, topsep=5pt]
    \item For an agent $v$, we begin by generating a random number $0\leq p < 1$ starting from their first choice.
    \item If $p \leq p_2$, then we use $g$ to generate a random number $i$. The next $i$ agents  are assigned the same rank as the first choice. The agent following these $i+1$ agents is assigned a rank one higher than the first choice, and a new random number $p$ is generated to continue the process from this agent.
    \item Otherwise, the next agent is assigned a rank one higher than the first choice, and the process continues in the same manner.
\end{enumerate}
When we set $g=\text{Geom}(1-p_2)$, where $\text{Geom}(1-p_2)$ denotes a geometric distribution with the parameter $1-p_2$, the instances generated by the updated generator are nearly identical to those generated by the original one. In this setting, lower values of $p_2$ result in fewer, shorter ties, while higher values lead to a greater number of longer ties. This characteristic simplifies the problem even if $p_2$ is high, as nearly all preferences are tied, indicating general indifference. Therefore, we introduce $g=\text{Geom}(p_2)$ to generate an inverse relationship between the number of ties and the length of each tie. This modification increases the problem's complexity when $p_2$ is high, as a greater number of shorter ties complicates the solving process. 

Similarly, the modified HRT generator accepts five parameters: the number of residents ($n$), the number of hospitals ($m$), the probability of incompleteness ($p_1$), the probability of starting a tie ($p_2$), and the random generator of the tie length ($g$). Experimental results indicate that HRT problems with capacities uniformly distributed among hospitals present a greater challenge compared to those with randomly allocated capacities \citep{HR}. Therefore, the capacity is uniformly distributed among hospitals in the HRT instances. 

To evaluate the efficacy of various algorithms in addressing the MAX-SMTI and MAX-HRT problems, we categorize the problems into two sizes: small and large. For the small-sized problems, we select values of $p_1$ that result in instances which are difficult to solve by simply applying the GS algorithm with randomly broken ties. For the large-sized problems, we set a high value for $p_1$ to generate more challenging instances. We systematically vary parameters across both problem sizes, with the configurations detailed in Table \ref{tab:problem-configurations}. Furthermore, we conduct scalability experiments to assess how the performance of our algorithms scales as the input size increases from 1,000 to 10,000, with the configurations provided in Table \ref{tab:problem-configurations-scalability}. In these two configuration tables, parameters are presented with specific ranges, intervals, or formulas. For instance, the notation ``0.7--0.9 (0.1)'' indicates values ranging from 0.7 to 0.9 in increments of 0.1. The use of commas, such as in ``$\text{Geom}(p_2), \text{Geom}(1-p_2)$'', denotes an alternative parameter set being considered for the parameter. Additionally, the formula ``$(n-10)/n$'' represents a value that is computed based on $n$. We define each unique combination of $n,m,p_1,p_2$, $g$ as a \textit{configuration}. To ensure a robust evaluation, we generate 100 instances for each configuration. The randomly generated instances used in this study are available in the Science Data Bank repository, \href{https://www.scidb.cn/en/s/ZfauMn}{https://www.scidb.cn/en/s/ZfauMn}.

\begin{table}
\caption{Problem configurations for the MAX-SMTI and MAX-HRT problems}
\label{tab:problem-configurations}
\begin{tabular*}{\textwidth}{@{\extracolsep\fill}lcccc}
\toprule%
& \multicolumn{2}{@{}c@{}}{MAX-SMTI} & \multicolumn{2}{@{}c@{}}{MAX-HRT} \\\cmidrule{2-3}\cmidrule{4-5}%
Parameter & Small-sized & Large-sized & Small-sized & Large-sized \\
\midrule
$n$ & 100 & 1,000 & 100 & 1,000 \\
$m$ & - & - & 10 & 10--50 (10) \\
$p_1$ & 0.7--0.9 (0.1) & 0.95--0.99 (0.01) & 0.3--0.9 (0.1) & 0.9 \\
$p_2$ & 0.1--1 (0.1) & 0.1--1 (0.1) & 0.1--1 (0.1) & 0.1--1 (0.1) \\
$g$ & \makecell{$\text{Geom}(p_2),$ \\ $\text{Geom}(1-p_2)$} & \makecell{$\text{Geom}(p_2),$ \\ $\text{Geom}(1-p_2)$} & \makecell{$\text{Geom}(p_2),$ \\ $\text{Geom}(1-p_2)$} & \makecell{$\text{Geom}(p_2),$ \\ $\text{Geom}(1-p_2)$} \\
\botrule
\end{tabular*}
\end{table}

\begin{table}
\caption{Problem configurations for the scalability experiments}
\label{tab:problem-configurations-scalability}
\begin{tabular*}{\textwidth}{@{\extracolsep\fill}lcc}
\toprule
Parameter & MAX-SMTI & MAX-HRT\\
\midrule
$n$ & 1,000-10,000 (1,000) & 1,000-10,000 (1,000) \\
$m$ & - & 30 \\
$p_1$ & $(n-10)/n$ & 0.9 \\
$p_2$ & 0.5 & 0.5 \\
$g$ &$\text{Geom}(p_2)$ & $\text{Geom}(p_2)$ \\
\botrule
\end{tabular*}
\end{table}

\subsection{Parameter settings}\label{subsec:experiments-parameter-settings}

In both TBLS and TBLS-E, the parameters $c=0.9$ and $p_d=0.05$ are consistently maintained across all configurations. The parameter $c$ is utilized to compute $e_M$, which estimates the minimum size of the matching during the search process. Meanwhile, $p_d$ denotes the probability of executing a disruption. Furthermore, the time threshold $T$ for obtaining stable matchings is set equal to the time required to obtain a stable matching with the use of the base algorithm at the beginning. The parameters $k_u$ and $k_w$, which are adapted according to the scale and complexity of the problem, are detailed in Table \ref{tab:TBLS-variable-parameters}. Specifically, $k_u$ and $k_w$ represent the number of agents selected from sets $U$ and $W$ for disruption, respectively. In the context of the MAX-SMTI problem, $U$ and $W$ represent men and women, respectively, while in the MAX-HRT problem, $U$ and $W$ correspond to residents and hospitals, respectively. For TBLS, GS serves as the base algorithm, whereas for TBLS-E, PDB is used as the base algorithm.

\begin{table}
\caption{Variable parameters for the TBLS and TBLS-E algorithms}
\label{tab:TBLS-variable-parameters}
\begin{tabular*}{\textwidth}{@{\extracolsep\fill}lcccc}
\toprule%
& \multicolumn{2}{@{}c@{}}{MAX-SMTI} & \multicolumn{2}{@{}c@{}}{MAX-HRT} \\\cmidrule{2-3}\cmidrule{4-5}%
Parameter & Small-sized & Large-sized & Small-sized & Large-sized \\
\midrule
$k_u$ & 1 & 5 & 1 & 5 \\
$k_w$ & 1 & 5 & 1 & 1 \\ 
\botrule
\end{tabular*}
\end{table}

\subsection{Performance comparison}\label{subsec:experiments-performance-comparison}

In each configuration, we calculate the average result across 100 instances for each algorithm. Specifically, for each algorithm and under each configuration, the average performance metrics reported in this study are computed over 100 instances using the following equations:
\begin{equation}
\label{eq:avg_nos}
    \text{Average number of singles / unassigned positions} = n - \frac{1}{|S|}\sum_{M_i\in S}|M_i|
\end{equation}
\begin{equation}
    \text{Average SECost} = \frac{1}{|S|}\sum_{M_i\in S}\text{SECost}(M_i)
\end{equation}
\begin{equation}
    \text{Average execution time} = \frac{1}{|I|}\sum_{i=1}^{|I|}\text{time}(i)
\end{equation}
where $I$ is the set of instances in the configuration, $M_i$ is the resulting matching produced by the algorithm when solving instance $i$, and $S$ denote the set of stable matchings among these, i.e., $S=\{M_i\mid M_i \text{ is stable for }i=1,2,\cdots,|I|\}$. 

Note that all local search algorithms used to compare the quality of the solution produced stable matches on all small-sized and large-sized instances, except that MCA solved only 99 of 100 instances in one configuration. For this reason, we did not report the percentage of stable matchings in the results.

An algorithm is deemed the ``winner'' in a configuration if its result is not worse than that of any other algorithm. To compare the performance of different algorithms, we count the number of wins and compute the overall average result across all configurations. Calculating the overall average result of an algorithm involves two steps. First, the result for each configuration is calculated by averaging the outcomes of 100 instances. Then, the results for each configuration are averaged to obtain the final overall result. These two metrics---the number of wins and the overall average---offer a comprehensive basis for comparing the performance of the algorithms across different configurations.

For the MAX-SMTI problem, we evaluate the solution quality of TBLS and TBLS-E by examining the matching size and the sex equality cost (SECost). These results are compared against those of AS, GSA2, GSM, LTIU, MCS, and PDB. Specifically, LTIU is built upon the SML2 algorithm detailed in \cite{LTIU}, and PDB is applied to obtain a stable matching after randomly breaking all the ties. Additionally, we compare the execution time of TBLS and TBLS-E with that of AS, LTIU, and MCS, as they are all heuristic search methods. For the MAX-HRT problem, we assess the solution quality of TBLS, focusing on the matching size, in comparison with ASBM, HA, HPA, HR, and MCA. Furthermore, we compare the execution time of TBLS with that of HR and MCA, since they all employ heuristic techniques. For both problems, we do not include approximation algorithms such as GSA2, HA, etc., in the execution time comparison, since they operate under fundamentally different assumptions and performance objectives compared with heuristic algorithms. The detailed results for each configuration presented in Sections \ref{subsubsec:SMTI} and \ref{subsubsec:HRT} are provided in Online Resource 2, whereas the results corresponding to Section \ref{subsubsec:scalability} are included in Online Resource 3.

\subsubsection{MAX-SMTI problems}\label{subsubsec:SMTI}

First, we compare the solution quality of different algorithms in solving small-sized MAX-SMTI problems, including 60 configurations. The maximum number of iterations is set to 3,000 for TBLS and TBLS-E, and 50,000 for LTIU, AS, and MCS, to fully leverage the capabilities of the latter three local search algorithms.

\begin{figure}
    \centering
    \begin{subfigure}[b]{0.495\textwidth}
        \centering
        \includegraphics[width=\textwidth]{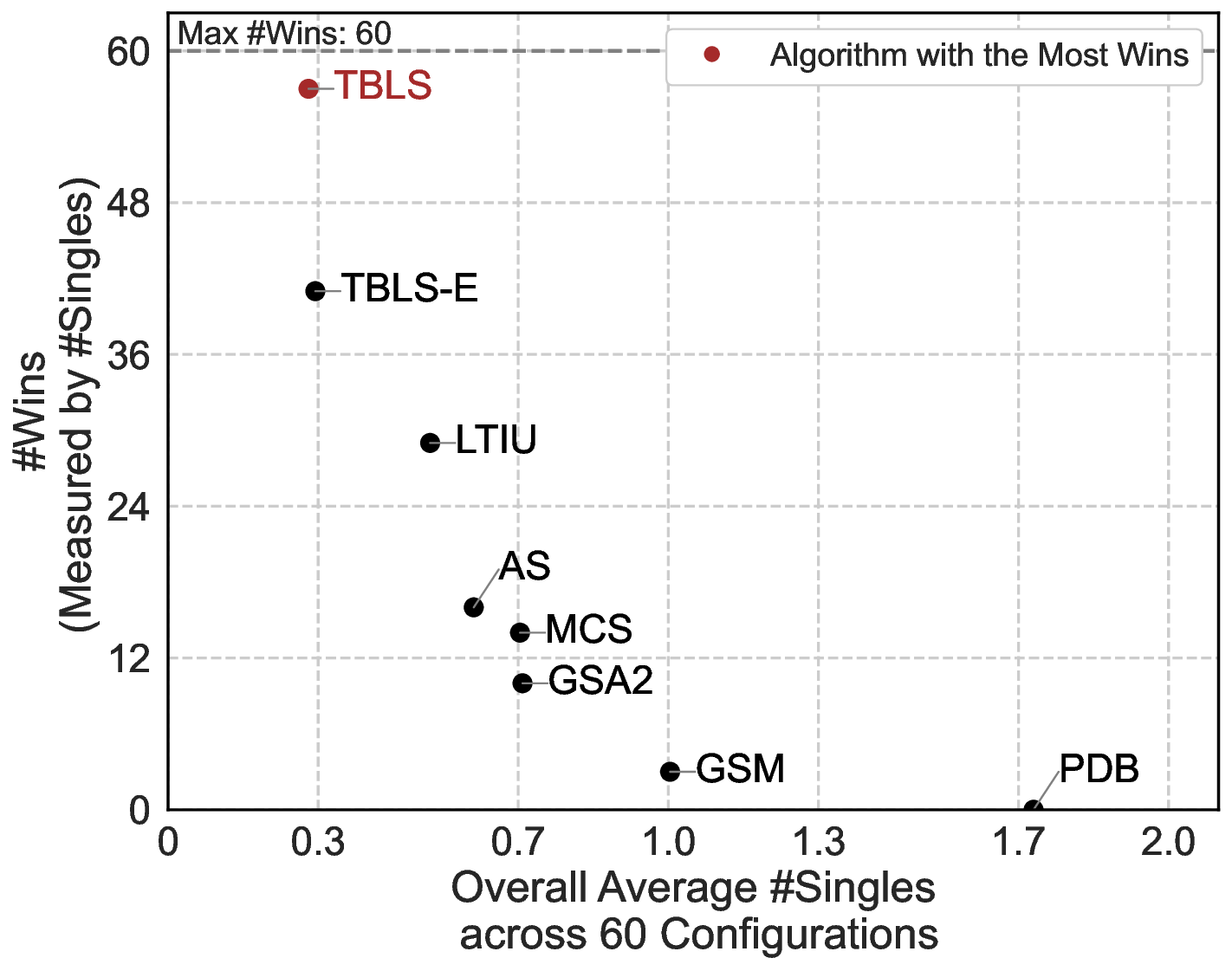}
        \caption{\textnormal{Number of singles}}
        \label{fig:small-sized-SMTI-quality-nos}
    \end{subfigure}
    \hfill
    \begin{subfigure}[b]{0.495\textwidth}
        \centering
        \includegraphics[width=\textwidth]{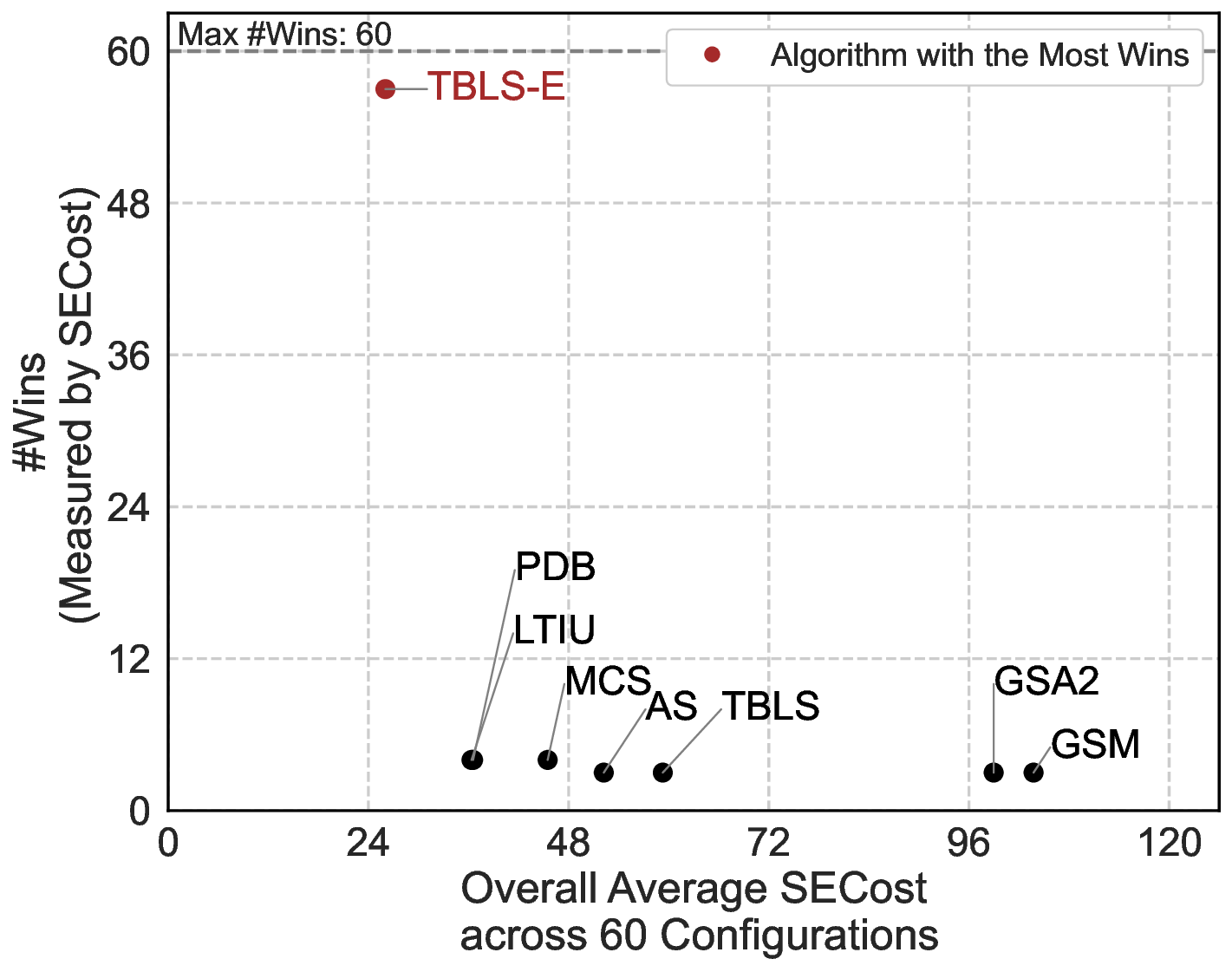}
        \caption{\textnormal{Sex equality cost}}
        \label{fig:small-sized-SMTI-quality-sec}
    \end{subfigure}
    \hfill
    \vspace{-20pt} 
    \caption{\textnormal{Solution quality comparison for small-sized MAX-SMTI problems}}
    \label{fig:small-sized-SMTI-quality}
\end{figure}

Figure \ref{fig:small-sized-SMTI-quality-nos} displays the number of singles produced by various algorithms in solving small-sized MAX-SMTI problems. Our TBLS algorithm achieves the highest number of wins and the lowest average number of singles across 60 configurations. TBLS-E closely follows, also demonstrating excellent performance with an average number of singles very similar to that of TBLS. Furthermore, TBLS-E stands out as the top performer measured by SECost, as illustrated in Figure \ref{fig:small-sized-SMTI-quality-sec}. In almost all configurations, TBLS-E achieves the fairest matchings, significantly outperforming other algorithms. 

Next, we assess the solution quality in addressing large-scale MAX-SMTI problems across 100 configurations, maintaining the same settings for the maximum number of iterations as used for small-sized MAX-SMTI problems. Due to its slow performance, LTIU is excluded from this evaluation. As the instances become harder, the performance gap between algorithms widens. Regarding the number of singles, TBLS remains the winner in most configurations, with approximately 0.2 fewer singles than the second-place TBLS-E and about 1.6 fewer singles than the third-place GSA2, as shown in Figure \ref{fig:large-sized-SMTI-quality-nos}. As illustrated in Figure \ref{fig:large-sized-SMTI-quality-sec}, TBLS-E consistently leads in most configurations, with an average SECost approximately 100 lower than MCS, which ranks second among the algorithms specifically designed for the MAX-SMTI problems.

\begin{figure}
    \centering
    \begin{subfigure}[b]{0.495\textwidth}
        \centering
        \includegraphics[width=\textwidth]{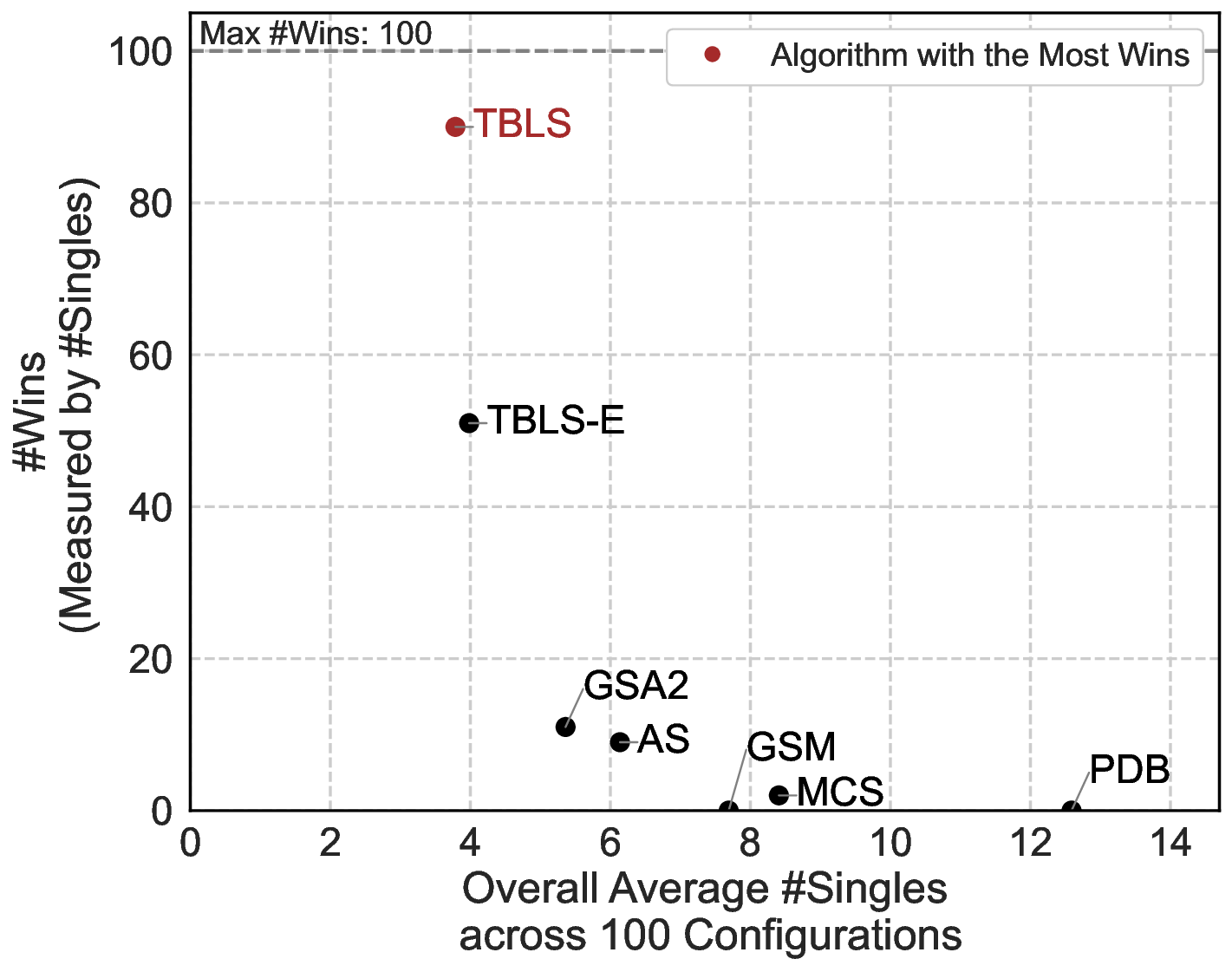}
        \caption{\textnormal{Number of singles}}
        \label{fig:large-sized-SMTI-quality-nos}
    \end{subfigure}
    \hfill
    \begin{subfigure}[b]{0.495\textwidth}
        \centering
        \includegraphics[width=\textwidth]{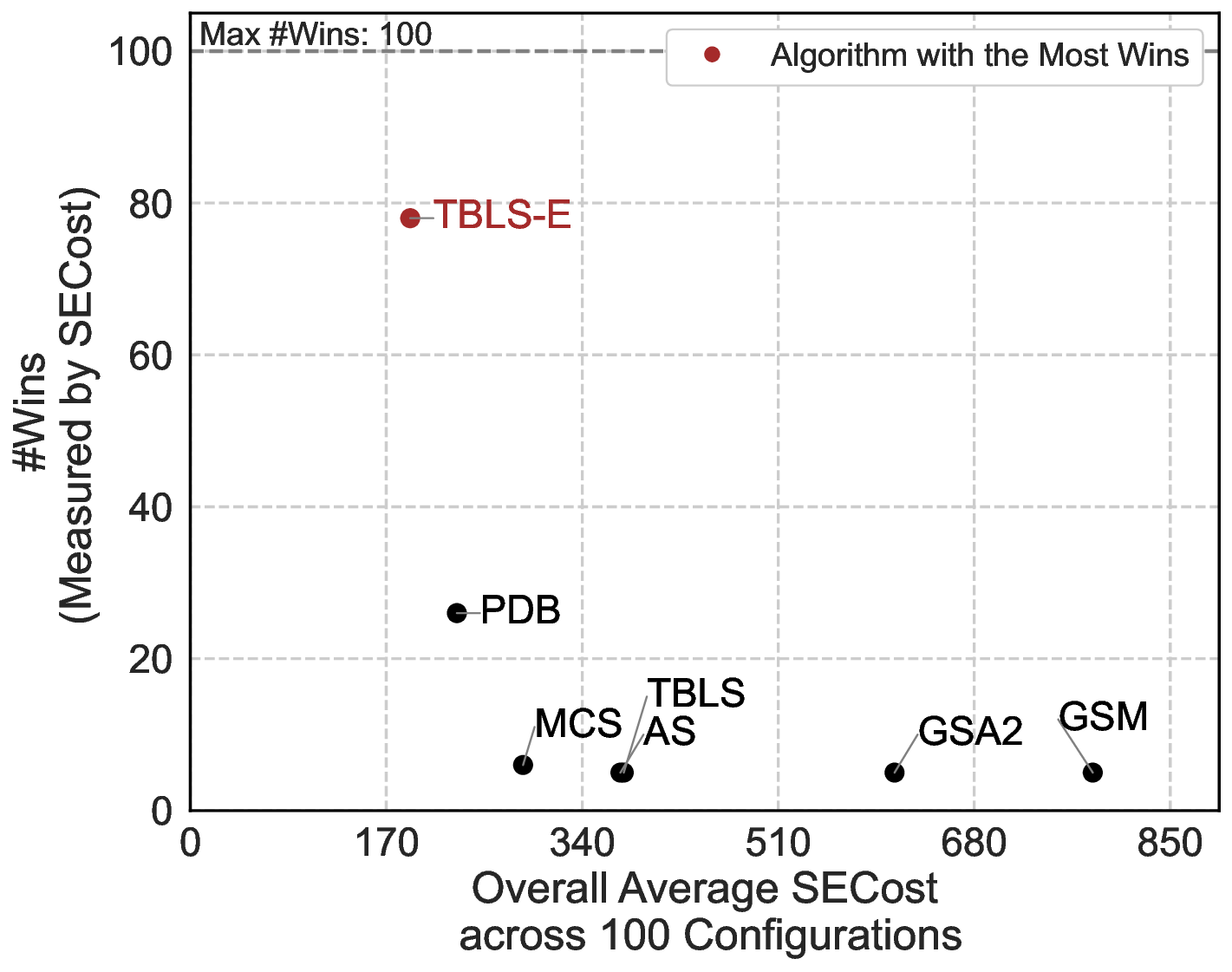}
        \caption{\textnormal{Sex equality cost}}
        \label{fig:large-sized-SMTI-quality-sec}
    \end{subfigure}
    \hfill
    \vspace{-20pt} 
    \caption{\textnormal{Solution quality comparison for large-sized MAX-SMTI problems}}
    \label{fig:large-sized-SMTI-quality}
\end{figure}

\begin{figure}
    \centering
    \begin{subfigure}[b]{0.495\textwidth}
        \centering
        \includegraphics[width=\textwidth]{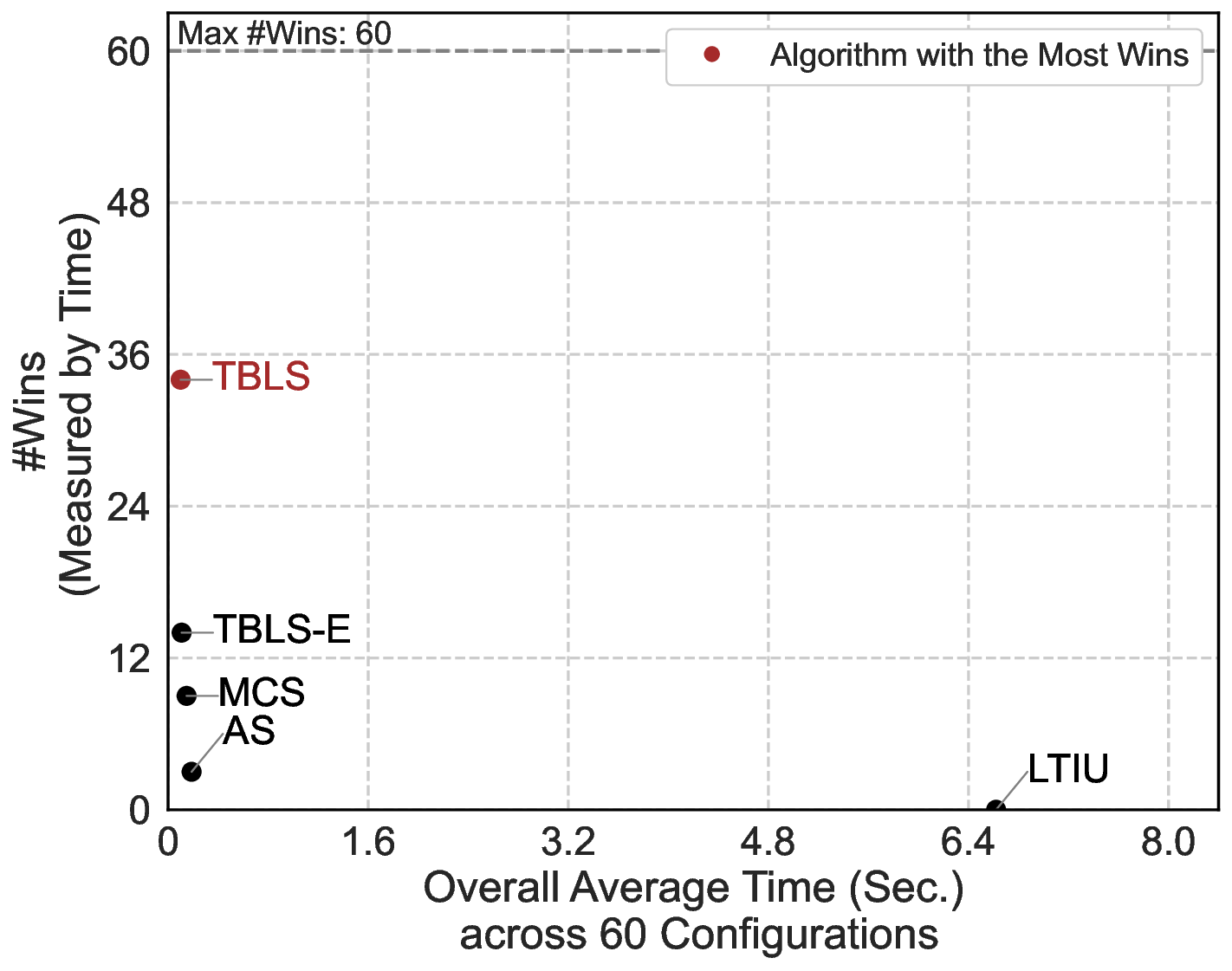}
        \caption{\textnormal{Small-sized MAX-SMTI - execution times}}
        \label{fig:small-sized-SMTI-time}
    \end{subfigure}
    \hfill
    \begin{subfigure}[b]{0.495\textwidth}
        \centering
        \includegraphics[width=\textwidth]{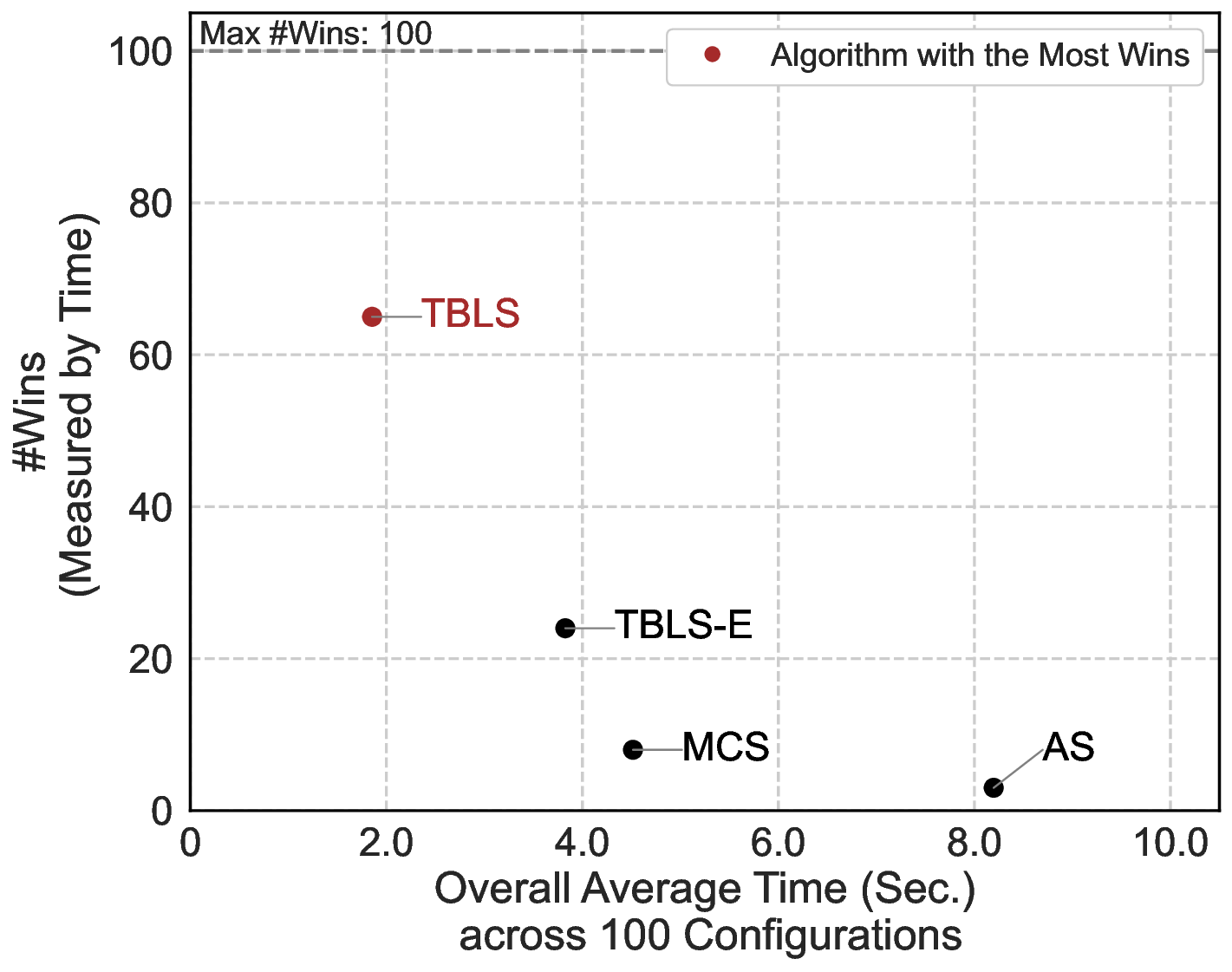}
        \caption{\textnormal{Large-sized MAX-SMTI - execution times}}
        \label{fig:large-sized-SMTI-time}
    \end{subfigure}
    \hfill
    \vspace{-20pt} 
    \caption{\textnormal{Execution time comparison of local search algorithms for MAX-SMTI problems}}
    \label{fig:SMTI-time}
\end{figure}

It is noteworthy that the proposed TBLS-E algorithm outperforms both PDB and all other algorithms, achieving the lowest SECost in 95\% of small-sized configurations and 78\% of large-sized configurations. This improved performance can be attributed to the fact that, in addition to leveraging PDB as its base algorithm, TBLS-E introduces further constraints on tie-breaking adjustments to better balance satisfaction. These results support the effectiveness of TBLS-E in identifying more equitable matchings while maintaining matching sizes comparable to those achieved by its original algorithm TBLS.

Finally, we evaluate the execution times of different local search algorithms in solving small-sized and large-sized MAX-SMTI problems. For these comparisons, the maximum number of iterations is set to 3,000 for all the algorithms. Figure \ref{fig:small-sized-SMTI-time} illustrates the execution times for small-sized MAX-SMTI problems. Although all these algorithms, except LTIU, complete their searches within an average of 0.2 seconds, our TBLS algorithm performs the fastest in most configurations. This gap becomes more pronounced as the problem size increases, which is shown in Figure \ref{fig:large-sized-SMTI-time}. For large-sized MAX-SMTI problems, TBLS and TBLS-E are the top performers in nearly all configurations. On average, the first-place TBLS and the second-place TBLS-E run about 2.5 and 1.2 times faster than the third-place MCS, respectively.

\subsubsection{MAX-HRT problems}\label{subsubsec:HRT}

We begin by evaluating the solution quality of different algorithms for solving MAX-HRT problems of small and large sizes. The maximum number of iterations is still set to 3,000 for TBLS. To maximize the performance of other local search algorithms, the maximum number of iterations is set to 5,000 for both HR and MCA.

Figure \ref{fig:small-sized-HRT-quality-nos} and Figure \ref{fig:large-sized-HRT-quality-nos} illustrate the number of unassigned positions produced by various algorithms in addressing small-sized and large-sized MAX-HRT problems, respectively. Overall, TBLS outperforms all other algorithms in both problem scales, consistently yielding matchings with the fewest unassigned positions. Notably, for large-sized MAX-HRT problems, TBLS produces approximately 1.8 fewer unassigned positions than the second-place algorithm, HPA, and about 11.8 fewer unassigned positions than HR, the second-best among all local search algorithms.

\begin{figure}
    \centering
    \begin{subfigure}[b]{0.495\textwidth}
        \centering
        \includegraphics[width=\textwidth]{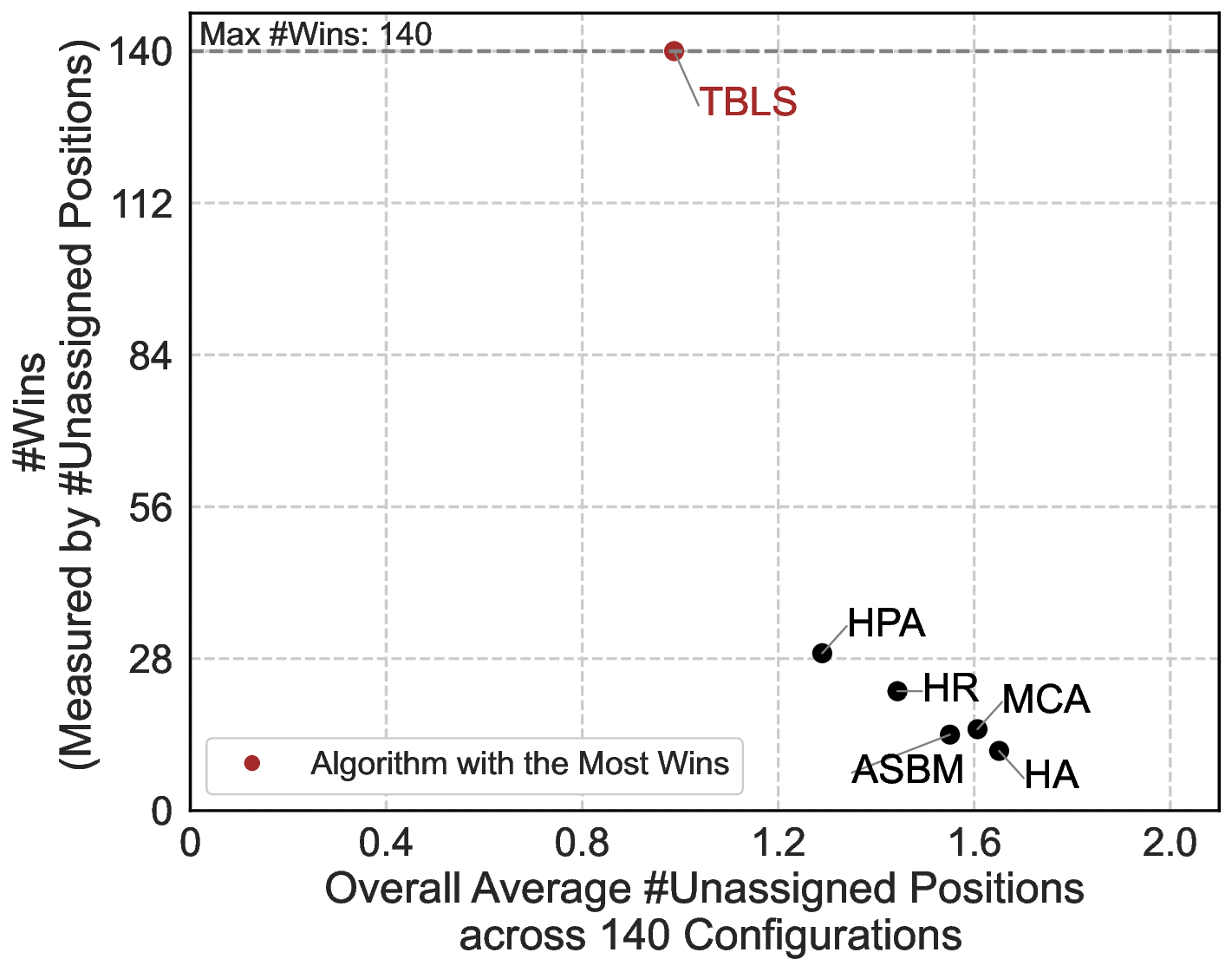}
        \caption{\textnormal{Small-sized MAX-HRT - number of unassigned positions}}
        \label{fig:small-sized-HRT-quality-nos}
    \end{subfigure}
    \hfill
    \begin{subfigure}[b]{0.495\textwidth}
        \centering
        \includegraphics[width=\textwidth]{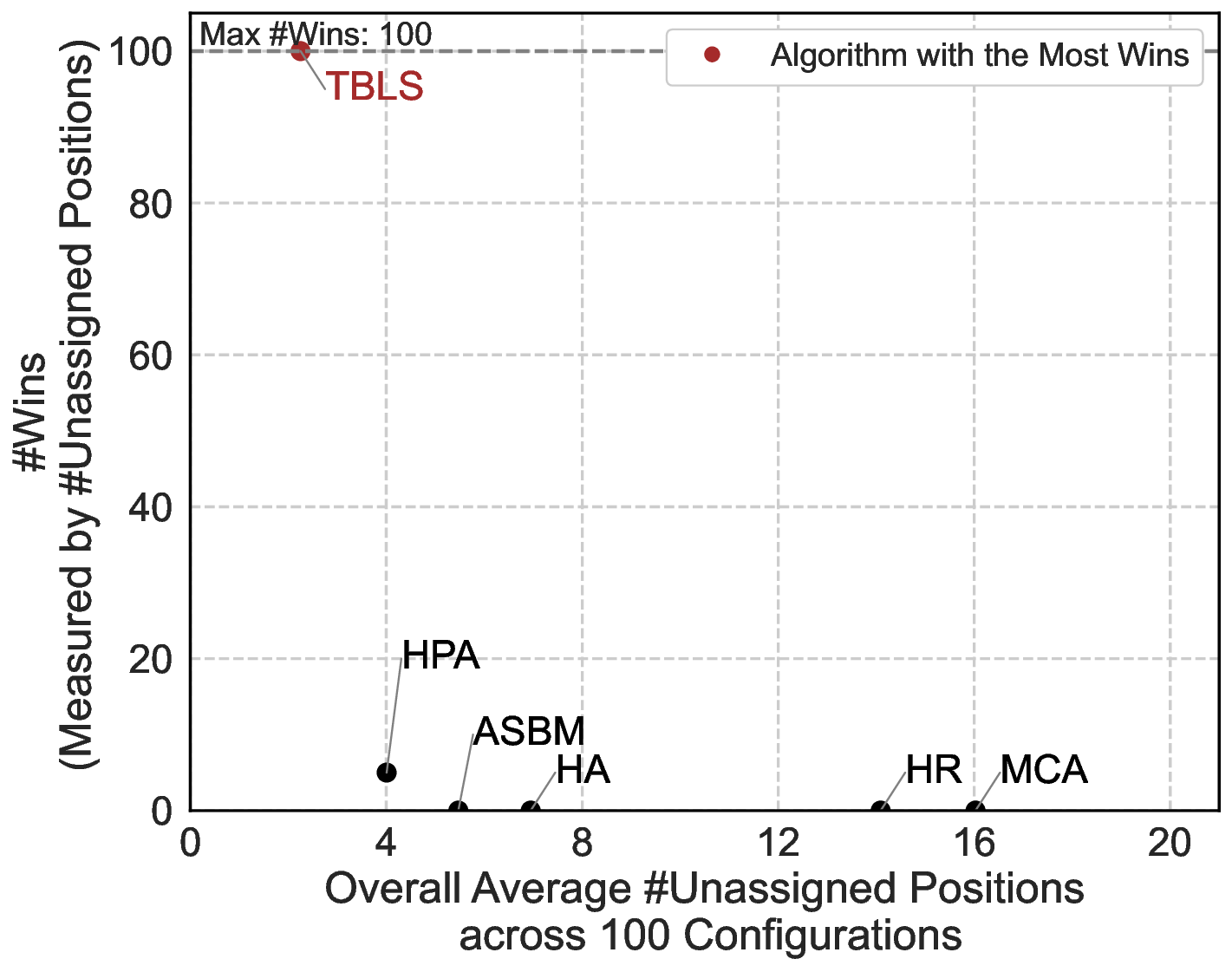}
        \caption{\textnormal{Large-sized MAX-HRT - number of unassigned positions}}
        \label{fig:large-sized-HRT-quality-nos}
    \end{subfigure}
    \hfill
    \vspace{-20pt} 
    \caption{\textnormal{Solution quality comparison for MAX-HRT problems}}
    \label{fig:HRT-quality}
\end{figure}

Additionally, the experimental results reveal that existing local search algorithms suffer a notable decline in performance with increasing problem size $n$. This phenomenon is more obvious for the MAX-HRT problem, as illustrated in Figure \ref{fig:HRT-quality}. One possible explanation is that as input size grows, the solution space expands substantially, and local optima become more difficult to escape. In contrast, our proposed algorithms, TBLS and TBLS-E, maintained robust performance on both problems with larger input sizes. This may be explained by their distinct neighborhood structures: TBLS and TBLS-E employ tie-breaking strategies, whereas other local search algorithms rely on removing blocking pairs. These findings suggest that tie-breaking based neighborhood structures facilitate more effective exploration of the solution space, thereby enhancing both solution quality and robustness, particularly in large-scale instances.

\begin{figure}[h]
    \centering
    \begin{subfigure}[b]{0.495\textwidth}
        \centering
        \includegraphics[width=\textwidth]{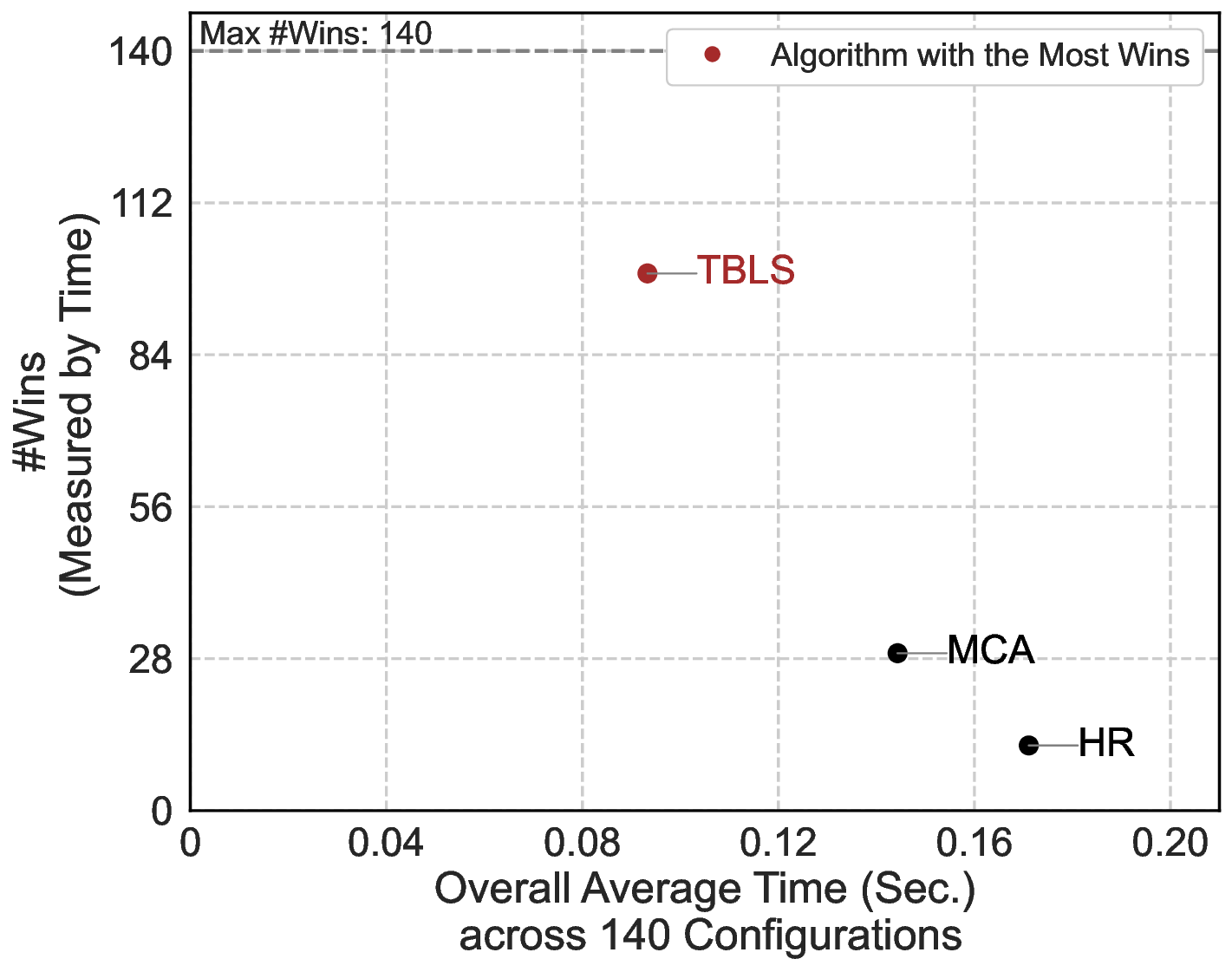}
        \caption{\textnormal{Small-sized MAX-HRT - execution times}}
        \label{fig:small-sized-HRT-time}
    \end{subfigure}
    \hfill
    \begin{subfigure}[b]{0.495\textwidth}
        \centering
        \includegraphics[width=\textwidth]{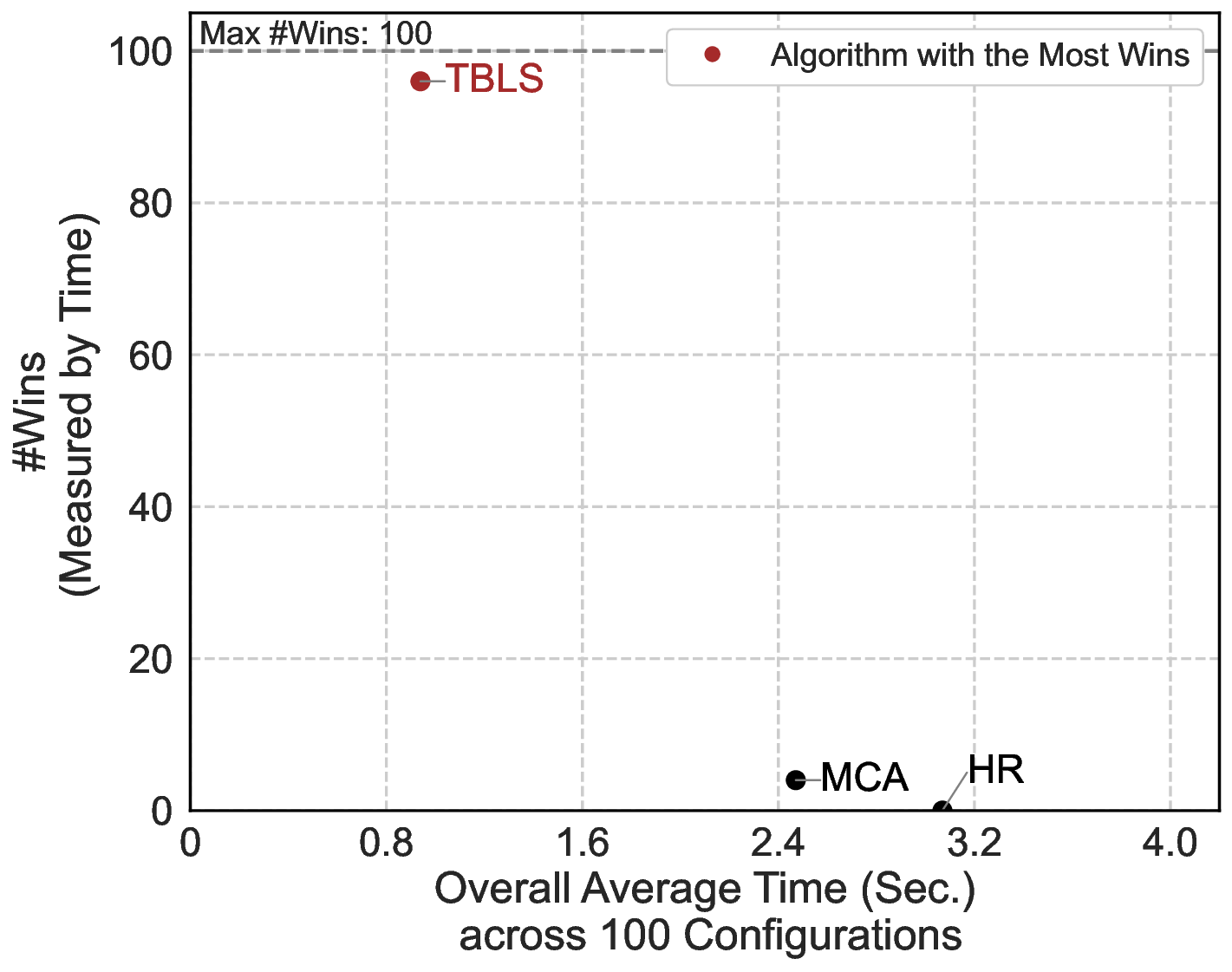}
        \caption{\textnormal{Large-sized MAX-HRT - execution times}}
        \label{fig:large-sized-HRT-time}
    \end{subfigure}
    \hfill
    \vspace{-20pt} 
    \caption{\textnormal{Execution time comparison of local search algorithms for MAX-HRT problems}}
    \label{fig:HRT-time}
\end{figure}

After evaluating the solution quality, we shift our focus to analyzing execution time performance. The maximum number of iterations is set to 3,000 for all the algorithms. Figures \ref{fig:small-sized-HRT-time} and Figure \ref{fig:large-sized-HRT-time} show the execution times for small-sized and large-sized MAX-HRT problems, respectively. In both problem scales, TBLS emerges as the algorithm with the highest number of wins, indicating that it runs the fastest in a large number of configurations. In particular, for large-sized MAX-HRT problems, TBLS operates approximately 2.6 times faster than MCA and about 3.2 times faster than HR.

\subsubsection{Scalability}\label{subsubsec:scalability}

In this section, we increase the input size $n$ from $1,000$ to $10,000$ to evaluate the scalability of our proposed algorithms in comparison with other local search methods. Since this experiment is intended to observe the trend in running time, the maximum number of iterations for each algorithm is set to $3,000$. We use the average number of singles or unassigned positions to assess the solution quality of the algorithm, as defined in Equation (\ref{eq:avg_nos}). If an algorithm fails to produce any stable matching for a given input size, no data point is shown for that algorithm at that size in the figure reporting the average number of singles or unassigned positions.

As shown in Figure \ref{fig:SMTI-scalability-nos}, the AS and MCS algorithms are unable to produce stable matchings for the MAX-SMTI instances within the given number of iterations when the input size exceeds 2,000 and 3,000, respectively. Since our algorithms resolve all ties, they are guaranteed to find a stable matching regardless of the number of iterations or the instance size. Figure \ref{fig:SMTI-scalability-time} depicts the relationship between the input size and the running time for solving MAX-SMTI problems. All four algorithms exhibit growth in time that is slightly faster than linear as the input size increases from 1,000 to 10,000. However, our proposed algorithms appear to have a more moderate growth rate in execution time. Compared with the other two local search algorithms, TBLS and TBLS-E achieve significantly better solution quality and substantially lower running times, demonstrating greater scalability. 

\begin{figure}[h]
    \centering
    \begin{subfigure}[b]{0.495\textwidth}
        \centering
        \includegraphics[width=\textwidth]{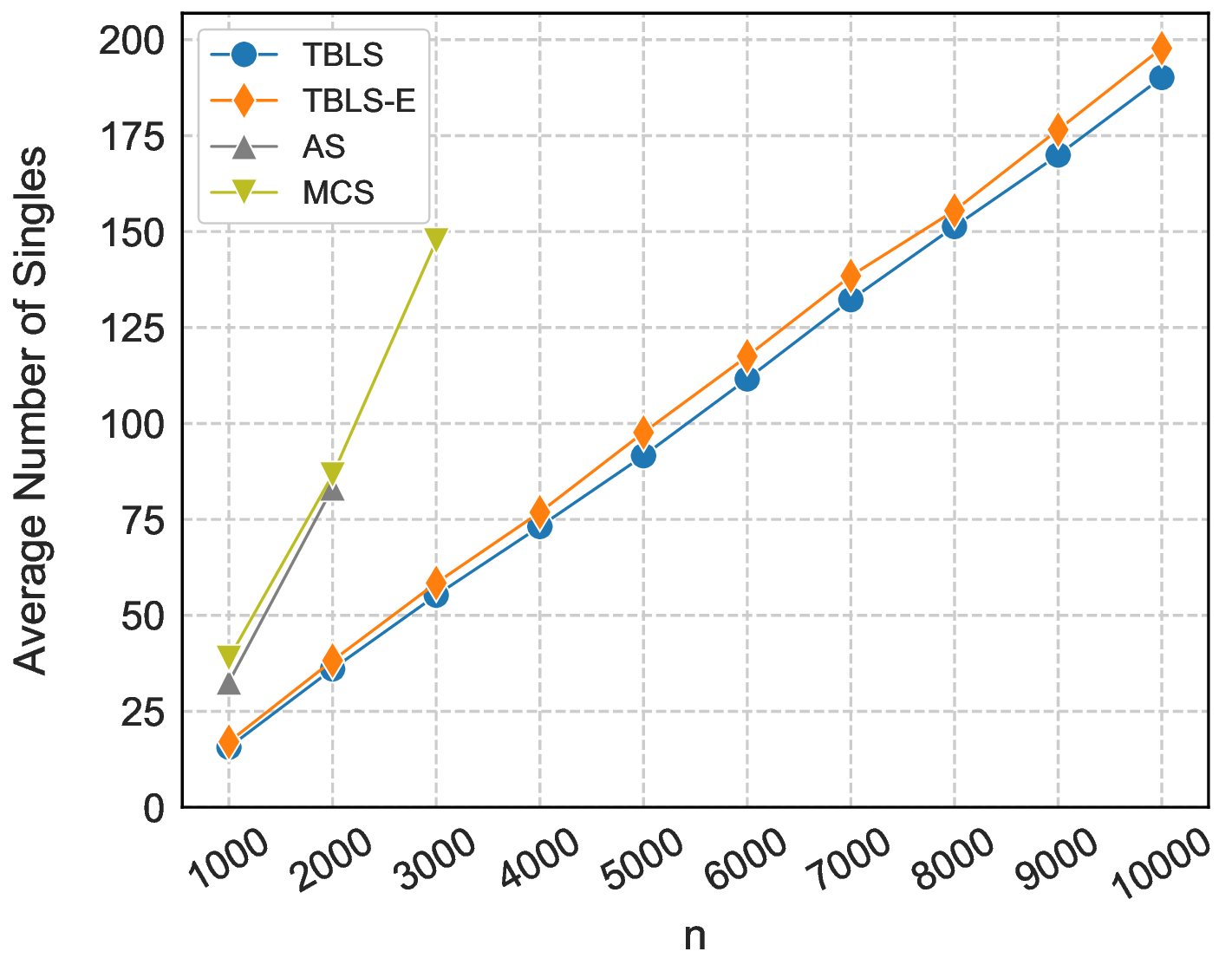}
        \caption{\textnormal{Number of singles}}
        \label{fig:SMTI-scalability-nos}
    \end{subfigure}
    \hfill
    \begin{subfigure}[b]{0.495\textwidth}
        \centering
        \includegraphics[width=\textwidth]{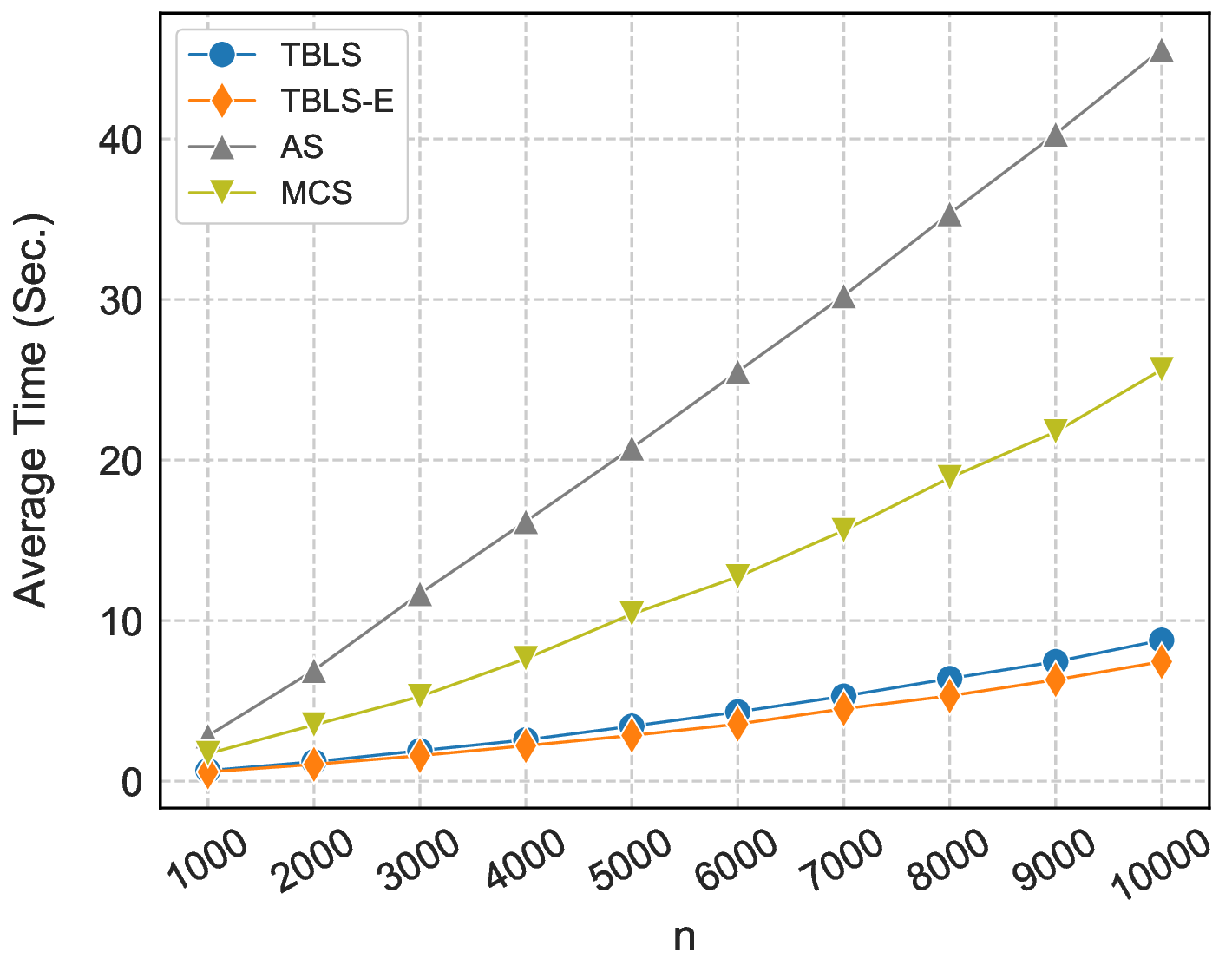}
        \caption{\textnormal{Execution times}}
        \label{fig:SMTI-scalability-time}
    \end{subfigure}
    \hfill
    \vspace{-20pt} 
    \caption{\textnormal{Scalability of four local search algorithms on MAX-SMTI problems}}
    \label{fig:SMTI-scalability}
\end{figure}

\begin{figure}[h]
    \centering
    \begin{subfigure}[b]{0.495\textwidth}
        \centering
        \includegraphics[width=\textwidth]{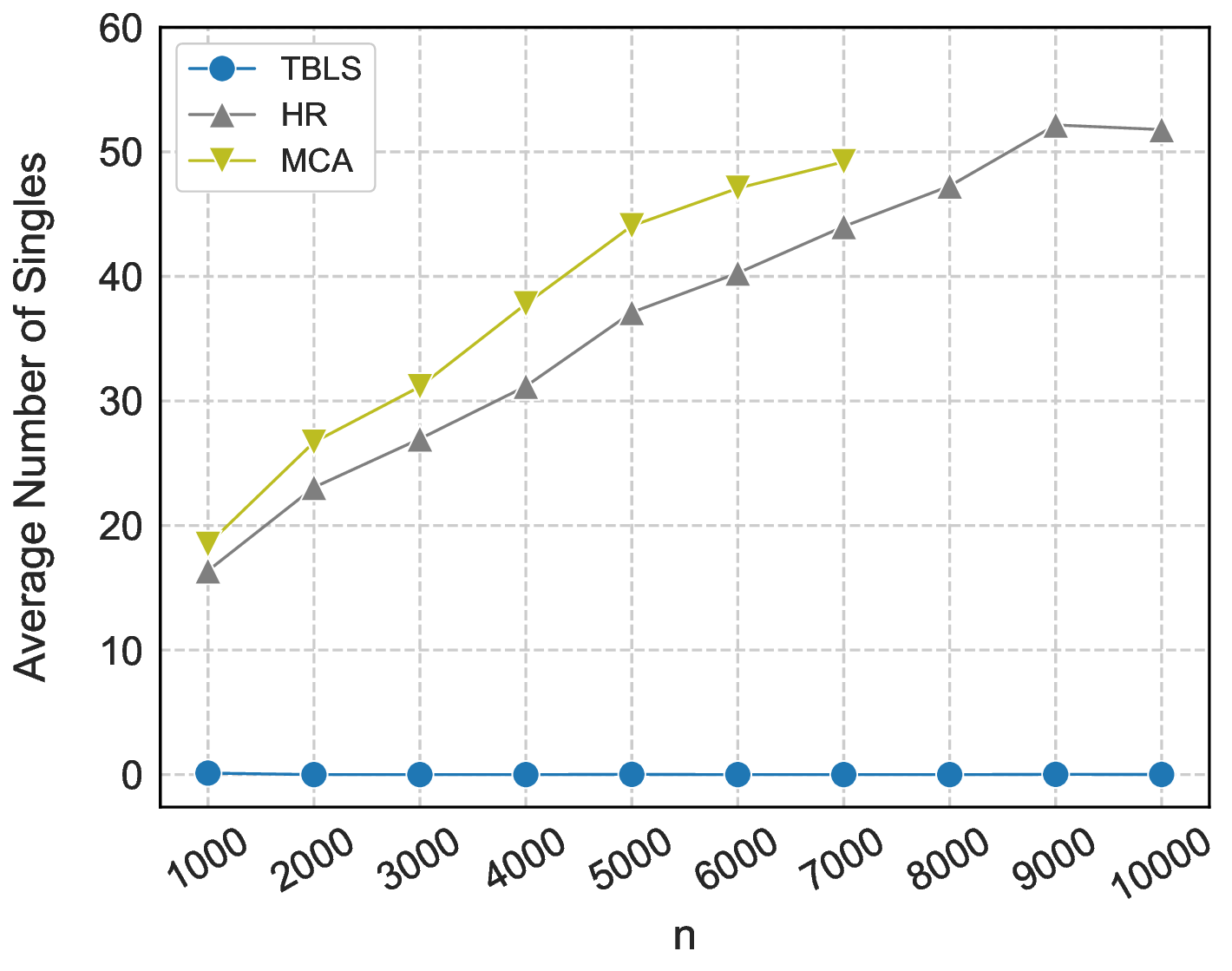}
        \caption{\textnormal{Number of unassigned positions}}
        \label{fig:HRT-scalability-nos}
    \end{subfigure}
    \hfill
    \begin{subfigure}[b]{0.495\textwidth}
        \centering
        \includegraphics[width=\textwidth]{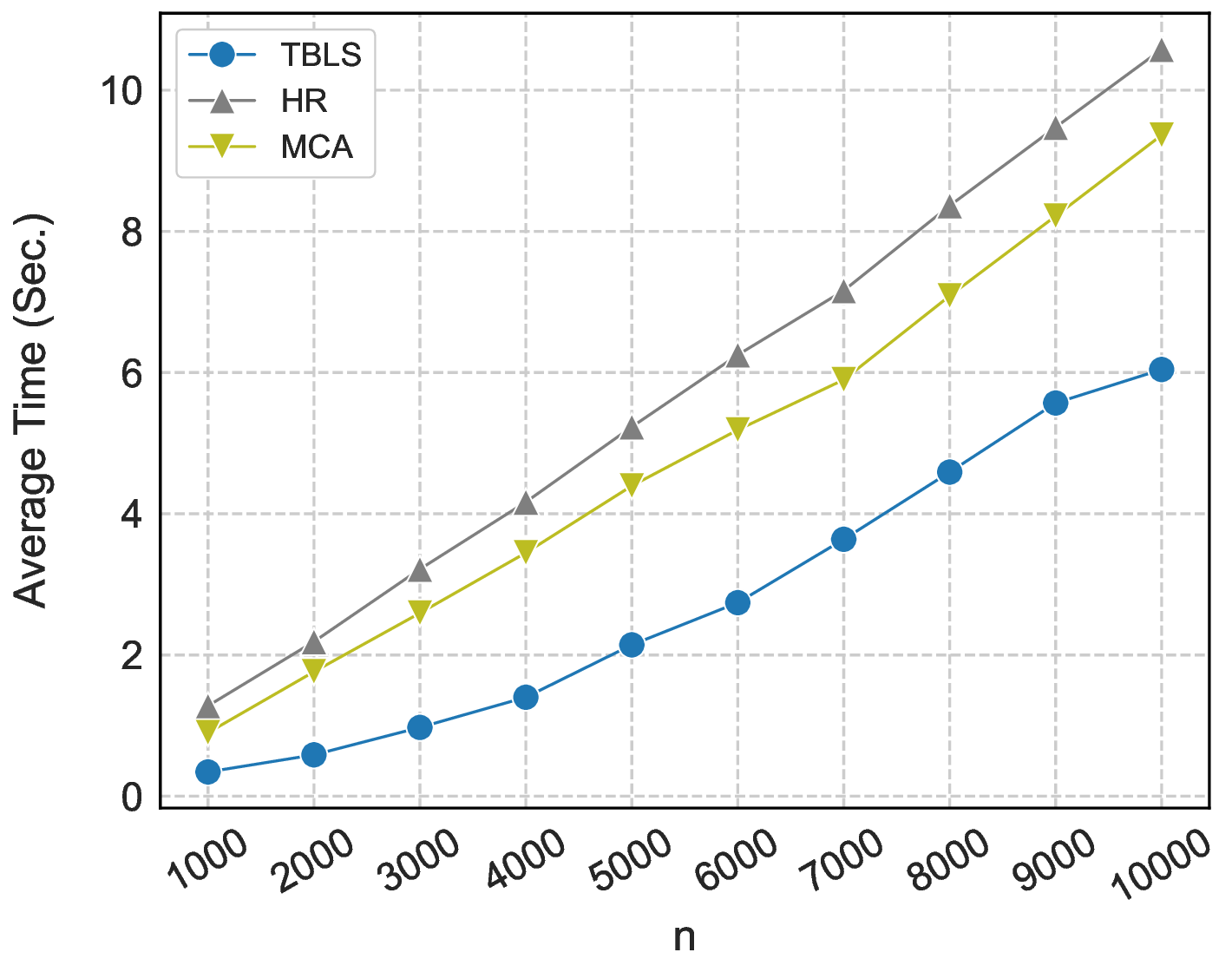}
        \caption{\textnormal{Execution times}}
        \label{fig:HRT-scalability-time}
    \end{subfigure}
    \hfill
    \vspace{-20pt} 
    \caption{\textnormal{Scalability of three local search algorithms on MAX-HRT problems}}
    \label{fig:HRT-scalability}
\end{figure}

Figure \ref{fig:HRT-scalability} compares the scalability of three local search algorithms on instances of the MAX-HRT problem. It is worth noting that TBLS produces perfect matchings for nearly all instances across various input sizes, whereas HR yields an average of 37 unassigned positions, and MCA fails to find stable matchings for instances when $n$ exceeds 7,000. Moreover, as illustrated in Figure \ref{fig:HRT-scalability-time}, TBLS consistently outperforms the other two algorithms in terms of running time. Based on the observed trends over the tested input sizes (from 1,000 to 10,000), there is no evidence suggesting that the growth rate of TBLS's running time will surpass that of the other two algorithms as the input size increases.

\section{Conclusion}\label{sec:conclusion}

This paper proposed a tie-breaking based local search algorithm, named TBLS, to solve the MAX-SMTI and MAX-HRT problems. We also introduced an equity-focused variant, TBLS-E, designed to find relatively fair matchings for the MAX-SMTI problem. Experimental results demonstrate that TBLS consistently outperforms other algorithms in achieving larger matching sizes for both the MAX-SMTI and MAX-HRT problems. Additionally, for the MAX-SMTI problem, TBLS-E yields matchings with lower sex equality costs while preserving matching sizes comparable to those produced by TBLS. Both TBLS and TBLS-E exhibit faster execution times than other local search algorithms when solving large-scale problems. The results of our scalability test suggest that both algorithms continue to perform efficiently with increasing problem size. In the future, we plan to incorporate neural networks and reinforcement learning techniques, such as G3DQN \citep{G3DQN} and NeuRewriter \citep{Neu-Rewirter}, to improve the evaluation function and the refinement process.

\section{Declarations}

\subsection{Funding}
No funding was received for conducting this study.

\subsection{Competing interests}
The author has no competing interests to declare that are relevant to the content of this article.


\end{document}